\def\random{0}
\documentclass[11pt,letterpaper]{article}

\usepackage[margin=1in,letterpaper]{geometry}
\usepackage[colorlinks,urlcolor=blue,citecolor=blue,linkcolor=blue]{hyperref}
\usepackage{fullpage}
\usepackage{subcaption}
\usepackage{graphicx}
\usepackage{epstopdf}
\graphicspath{{./figs/}}

\usepackage{multirow}
\usepackage{comment}
\usepackage{amsfonts,amsmath,mathtools,amssymb,amsthm}

\usepackage{xcolor}
\usepackage[ruled,vlined]{algorithm2e}

\usepackage{bbm,bm}
\usepackage{xspace,prettyref}
\usepackage{color}
\usepackage{dsfont}

\title{Testing Hamiltonicity (and other problems) in Minor-Free Graphs}

\author{
Reut Levi\thanks{Efi Arazi School of Computer Science,  The Interdisciplinary Center, Israel. Email: {\tt reut.levi1@idc.ac.il}.}
\and
Nadav Shoshan\thanks{Efi Arazi School of Computer Science,  The Interdisciplinary Center, Israel.  Email: {\tt nadav.shoshan1@post.idc.ac.il}.}
}

\addtolength{\parskip}{0.5ex}

\newcommand{\ham}{\rm HAM}
\newcommand{\eps}{\epsilon}
\newcommand{\eqdef}{\stackrel{\rm def}{=}}

\newcommand{\BE}{\begin{enumerate}} \newcommand{\EE}{\end{enumerate}}
\newcommand{\BI}{\begin{itemize}} \newcommand{\EI}{\end{itemize}}
\newcommand{\BDes}{\begin{description}}\newcommand{\EDes}{\end{description}}
\newtheorem{alg}{Algorithm}
\newcommand{\BA}{\begin{alg}} \newcommand{\EA}{\end{alg}}
\newtheorem{thm}{Theorem}
\newcommand{\BT}{\begin{thm}} \newcommand{\ET}{\end{thm}}

\newtheorem{lem}{Lemma}      
\newcommand{\BL}{\begin{lem}} \newcommand{\EL}{\end{lem}}

\newtheorem{clm}[lem]{Claim}
\newcommand{\BCM}{\begin{clm}} \newcommand{\ECM}{\end{clm}}

\newtheorem{fct}[lem]{Fact}
\newcommand{\BF}{\begin{fct}} \newcommand{\EF}{\end{fct}}

\newtheorem{techcor}[thm]{Corollary}
\newcommand{\BCo}{\begin{techcor}} \newcommand{\ECo}{\end{techcor}}

\newtheorem{cor}[thm]{Corollary}      
\newcommand{\BC}{\begin{cor}} \newcommand{\EC}{\end{cor}}
\newtheorem{prop}[thm]{Proposition}     
\newcommand{\BP}{\begin{prop}} \newcommand {\EP}{\end{prop}}
\newtheorem{conj} {Conjecture}      
\newcommand{\BCJ}{\begin{conj}} \newcommand{\ECJ}{\end{conj}}

\newtheorem{defn}{Definition}         
\newcommand{\BD}{\begin{defn}} \newcommand{\ED}{\end{defn}}

\newcommand{\bx}{{\bold x}}
\newcommand{\dist}{\delta_{\rm HAM}}
\newcommand{\by}{{\bold y}}
\newcommand{\clip}[2]{{\rm cl}(#1, #2)}

\newtheorem{theorem}{Theorem}

\newtheorem{lemma}{Lemma}
\newtheorem{claim}[lemma]{Claim}
\newtheorem{definition}[lemma]{Definition}

\newtheorem{remark}{Remark}

\newcommand{\calA}{{\mathcal A}}
\newcommand{\calC}{{\mathcal C}}
\newcommand{\calF}{{\mathcal F}}

\newcommand{\calP}{{\mathcal P}}
\newcommand{\wvec}[2]{\mathbf{p}_{#1, #2}}

\newcommand{\poly}{{\rm poly}}
\newcommand{\E}{{\rm E}}
\newcommand{\w}{w}
\newcommand{\MST}{{\rm MST}}
\newcommand{\MSF}{{\rm MSF}}

\newcommand{\W}{W}  


\renewcommand{\Pr}{\mathrm{Pr}}

\newcommand{\Var}{\mathrm{Var}}

\newcommand{\s}[1]{\left\lvert #1 \right\rvert}

\renewcommand{\d}{d} 

\newcommand{\calT}{{\cal T}}
\newcommand{\sz}{k}  
\newcommand{\heavy}{{\cal H}}
\newcommand{\light}{{\cal L}}
\newcommand{\wtd}{\Delta}
\newcommand{\wtG}{{\widetilde{G}}}

\newcommand{\mnote}[1]{{\color{red}$\spadesuit$} \marginpar{\tiny\bf
            \begin{minipage}[t]{0.5in}
              \raggedright #1
         \end{minipage}}}
         
\newcommand{\prw}[3]{p_{#1, #3}(#2)}
 \newcommand{\len}{\ell}

\begin{document}

\begin{titlepage}
\maketitle
\thispagestyle{empty}

\begin{abstract}
In this paper we provide sub-linear algorithms for several fundamental problems in the setting in which the input graph excludes a fixed minor, i.e., is a minor-free graph.
In particular, we provide the following algorithms for minor-free unbounded degree graphs. 
\begin{enumerate}
    \item A tester for Hamiltonicity with two-sided error with $\poly(1/\eps)$-query complexity, where $\eps$ is the proximity parameter.
    \item A local algorithm, as defined by Rubinfeld et al. (ICS 2011), for constructing a spanning subgraph with almost minimum weight, specifically, at most a factor $(1+\eps)$ of the optimum, with $\poly(1/\eps)$-query complexity.
\end{enumerate}
Both our algorithms use partition oracles, a tool introduced by Hassidim et al. (FOCS 2009), which are oracles that provide access to a partition of the graph such that the number of cut-edges is small and each part of the partition is small. 
The polynomial dependence in $1/\eps$ of our algorithms is achieved by combining the recent $\poly(d/\eps)$-query partition oracle of Kumar-Seshadhri-Stolman (ECCC 2021) for minor-free graphs with degree bounded by $d$.

For bounded degree minor-free graphs we introduce the notion of {\em covering partition oracles} which is a relaxed version of partition oracles 
and design a $\poly(d/\eps)$-time covering partition oracle for this family of graphs. 
Using our covering partition oracle we provide the same results as above (except that the tester for Hamiltonicity has one-sided error) for minor-free bounded degree graphs, 
as well as showing that any property which is monotone and additive (e.g. bipartiteness) can be tested in minor-free graphs by making $\poly(d/\eps)$-queries.

The benefit of using the covering partition oracle rather than the partition oracle in our algorithms is its simplicity and an improved polynomial dependence in $1/\eps$ in the obtained query complexity.
\end{abstract}

\end{titlepage}

\section{Introduction}
The family of minor-free graphs has been at the focus of attention ever since the theory of graph minors began many decades ago and has been drawing much attention in the field of computer science as well. 
Aside from being an important family that includes natural families of graphs such as planar graphs, it also has the appeal that some hard graph problems become easy when restricted to this family of graphs (e.g. Graph Isomorphism~\cite{HT72}).  

Minor-free graphs have been extensively studied also in the realm of sublinear algorithms and in particular property testing (see e.g.~\cite{BSS08,CGRSSS14,NS13,KY14,Ito16,BKN16,FLV18,KSS18,CS19,CMOS19,KSS19,KSS21}).
In particular, in the general-graph model~\cite{PR02} where there is much less body of work, compared to the bounded-degree graph model~\cite{GR02} and the dense graph model~\cite{GGR98}, these graphs draw attention as they allow for better characterization compared to general unbounded degree graphs.
A notable example is the result by Czumaj and Sohler~\cite{CS19} who recently gave a full characterization of properties that can be tested with one-sided error with query complexity that is independent of the size of the graph (i.e. {\em testable}).
They showed that the latter is possible if and only if testing the property can be reduced to testing for a finite family of finite forbidden subgraphs. 
This raises the question regarding the testability of properties that can not be reduced to testing for a finite family of finite forbidden subgraphs when we allow the tester to have two-sided error. A well known example of such property is the property of being Hamiltonian. 

Another question, which is also relevant for bounded degree graphs, is whether we can obtain algorithms with query complexity which is only polynomial in $1/\eps$ where $\eps$ is the proximity parameter. 
Newman and Sohler~\cite{NS13} showed that any property of bounded degree hyperfinite graphs and in particular minor-free graphs is testable. 
Their algorithm learns the graph up to modifications of $\eps d n$ edges with query complexity which is super-polynomial in $d$ and $\eps$. 
While this approach works for all properties of graphs, more efficient testers can be obtained for specific properties of graphs.
In particular, properties of graphs which are monotone and additive can be tested by using $O(d/\eps)$ queries to a partition oracle, a tool introduced by Hassidim et al.~\cite{HKNO09}. 
Thus, an implication of the recent $\poly(d/\eps)$-query partition oracle of Kumar-Seshadhri-Stolman~\cite{KSS21} is that  monotone and additive properties are testable with $\poly(d/\eps)$-queries. 
Thus the question of designing testers with $\poly(1/\eps)$-queries remains open for properties which are not monotone or not additive, as the property of being Hamiltonian.

An additional motivation for studying Hamiltonicity in minor-free graphs is, as shown by Yoshida-Ito~\cite{YI10} and Goldreich~\cite{Gol20a}, that it can not be tested with sublinear query complexity in general bounded degree graphs.

\subsection{Our Results}

All our algorithms work under the promise that the input graph is minor-free.

\subsubsection{Testing Hamiltonicity.}
In the general graph model, we provide an algorithm for approximating the distance from Hamiltonicity up to an additive error of $\eps n$ where $n$ denotes the number of vertices in the input graph with query complexity which is $\poly(1/\eps)$ and time complexity which is exponential in $\poly(1/\eps)$.
This also implies a tolerant tester with two-sided error for testing Hamiltonicity with the same complexities.

In the bounded-degree graph model, we provide an algorithm for testing Hamiltonicity with one-sided error with query complexity which is 
$\poly(d/\eps)$, where $d$ denotes the bound on the degree, and time complexity which is exponential in $\poly(1/\eps)$.

\subsubsection{Local algorithm for constructing  spanning subgraphs of almost optimum weight.}
In the general graph model, we provide a local algorithm for constructing a sparse spanning subgraph of weight at most $(1+\eps){\rm OPT}$ where ${\rm OPT}$ denotes the weight of the MST of the input graph. 
The algorithm receives as parameters $\eps$ and an upper bound, $W$, on the maximum weight of an edge in the graph.
Moreover, the number of edges of the output graph that do not belong to the MST of the input graph\footnote{Without loss of generality we assume that the weights of the edges are distinct and hence that there is a unique MST.} is $O(\eps n/W)$.
The query complexity and time complexity of the algorithm is $\poly(W/\eps)$.
We note that in addition to incidence queries our algorithm also use random neighbour queries. 

In the bounded-degree graph model, we provide a simpler algorithm with the same guarantees whose query complexity and time complexity is $\poly(dW/\eps)$, where the polynomial of the complexity is somewhat improved compared to the algorithm for graphs of unbounded degree. 

\subsubsection{Testing monotone and additive properties of graphs}
\sloppy
We prove that any property which is monotone (closed under removal of edges and vertices) and additive (closed under the disjoint union of graphs) can be tested in the bounded degree model with $\poly(d/\eps)$-query complexity under the promise that the input graph is minor-free. The same result was recently shown independently by Kumar-Seshadhri-Stolman~\cite{KSS21}. While in~\cite{KSS21} they use partition oracles in the proof, we use a relaxed notion of partition oracles (which we introduce in this paper) and consequently obtain a somewhat improved polynomial dependence in the query complexity and a simpler algorithm.

\def\relatedwork{
\subsection{Partition Oracles}
Partition oracles were introduced by Hassidim et al.~\cite{HKNO09} as a tool for approximating parameters and testing properties of minor-free bounded degree graphs. 
The query complexity of the partition oracle of~\cite{HKNO09} is exponential in $1/\eps$. 
The query complexity was later improved in~\cite{LR15} to be quasi-polynomial in $1/\eps$. 
Very recently, Kumar-Seshadhri-Stolman~\cite{KSS21} obtained a partition oracle with query complexity which is polynomial in $1/\eps$.

Edelman et al.~\cite{EHNO11} obtained a partition oracle with query complexity polynomial which is in $1/\eps$ for graphs with bounded treewidth.

\subsection{Testing Hamiltonicity}
Yoshida and Ito~\cite{YI10} and more recently Goldreich~\cite{Gol20a} proved a linear (in the number of vertices) lower bound for testing Hamiltonicity (even with two-sided error) in bounded degree graphs. 
Adler and Köhler~\cite{AK21} provided a deterministic construction of families of graphs for which testing 
Hamiltonicity with one-sided error requires linear number of queries.

\subsection{Testing properties of minor-free graphs}
Newman and Sohler~\cite{NS13} showed that any property of hyperfinite graphs and in particular minor-free graphs can be tested with query complexity that depends only on $1/\eps$ and $d$ where $d$ is a bound on the maximum degree. In fact, they proved a stronger claim, that a minor-free graph can be learned up to a precision of $\eps d n$ edges with such query complexity.
However, although the query complexity of their canonical tester is independent of $n$ it is super-polynomial in $d/\eps$.

For minor-free graphs of unbounded degrees, Czumaj et al.~\cite{CMOS19} obtained an algorithm whose query complexity depends only on $1/\eps$ for testing Bipartiteness. More recently, this result was generalized by  Czumaj ans Sohler~\cite{CS19} who proved that any property of minor-free graphs can be tested with one-sided error with query complexity that depends only on $1/\eps$ if and only if it can be reduced to testing for a finite family of finite forbidden subgraphs. Czumaj et al.~\cite{CMOS19} also provide a canonical tester for testing $H$-subgraph freeness for any fixed $H$ with query complexity that is independent of $n$, however super polynomial in $1/\eps$. 

It was shown that for other restrictive families of graphs of unbounded degree that every property is testable with query complexity which is at most polylogarithmic in $n$~\cite{KY14, Ito16, BKN16}.
Specifically, Kusumoto and Yoshida~\cite{KY14} proved that any property of forests can be tested with query complexity $\poly(\log n)$ and that testing Isomorphism of forests requires $\Omega(\sqrt{\log n})$.
This result was generalized in Babu-Khoury-Newman~\cite{BKN16} for $k$-outerplanar graphs.

\subsection{Local algorithms for constructing sparse spanning subgraphs}
The model of {\em local computation algorithms} as considered in
this work, was defined by Rubinfeld et al.~\cite{RTVX} (see also Alon et al.~\cite{ARVX12} and survey in~\cite{LM17}).
The problem of constructing sparse spanning subgraphs in this model was studied in several papers~\cite{LRR14,LMRRS,LR15,LL17,PRVY19,LRR20}.  
This problem is a special case of constructing an $\eps$-almost MST in which the weights of all the edges are identical.

For restricted families of  graphs, it was shown that the complexity of the problem is independent of $n$.
Specifically, it was shown in~\cite{LMRRS} that for families of graph that are, roughly speaking, sufficiency non-expanding, one can provide an algorithm with query complexity that is independent of $n$ (however, super-exponential in $1/\eps$).
This is achieved by simulating a localized version of Kruskal's algorithm.
On the negative side, it was also shown in~\cite{LMRRS} that for graphs with expansion properties that are a little better, there is {\em no local algorithm\/} that inspects a number of 
edges that is independent of $n$.

In~\cite{LRR20} there is an algorithm for locally constructing sparse spanning subgraphs in minor-free, unbounded degree, graphs with query complexity and time complexity which are polynomial in $d$ and $1/\eps$. Thus our algorithm for unbounded degree graphs generalizes this result for the weighted case.  
 
In~\cite{LRR14, LRR14-full} it was shown that a spanning subgraph of almost optimum weight can be constructed locally in minor-free graph with degree bounded by $d$ with query complexity and time complexity which are quasi-polynomial in $d$, $1/\eps$ and $W$ where $W$ is the maximum weight of an edge. 
Thus our algorithm for unbounded degree graphs generalizes this result to unbounded degree graphs and improves the complexity of the upper bound from quasi-polynomial to polynomial in $d$, $1/\eps$ and $W$.

}
\ifnum\random=0
\subsection{Related Work}

\relatedwork
\fi

\subsection{Our algorithms for minor-free unbounded degree graphs}

\subsubsection{Testing Hamiltonicity}
We begin by proving that the distance from Hamiltonicity of any graph $G=(V, E)$ equals the size of the minimum path cover of $G$ minus $1$, where a path cover of a graph is a set of disjoint paths such that each $v\in V$ belongs to exactly one path (see Claim~\ref{clm:pc}).

We then prove that if we remove $O(\eps |V|)$ edges from $G$ as well as edges that are incident to $O(\eps |V|)$ vertices in $G$ then the distance from Hamiltonicity may be increased by (at most) $O(\eps |V|)$ (see Claims~\ref{clm:edges} and~\ref{clm:vertices}).

Thus, in order to obtain an approximation, with an additive error of $O(\eps |V|)$, to the size of the minimum path cover of $G$ (and hence to its distance from Hamiltonicity) it suffices to obtain such approximation to the size of the minimum path cover of $\hat{G}$ where $\hat{G}$ is defined as follows. 
We obtain $\hat{G}$ from $G$ by first removing the edges incident to vertices of high degree, which we refer to as {\em heavy} vertices, then running the partition oracle on the resulting graph and then removing the cut-edges of the partition. 
An approximation to the size of the minimum path cover of $\hat{G}$ can be obtained by sampling vertices u.a.r. from $V$ and computing the size of the minimum path cover of their connected component in $\hat{G}$.

Since we obtain an approximation algorithm for the distance from being Hamiltonian we also obtain a tolerant tester with two-sided error for this property.

\subsubsection{Constructing spanning subgraphs with almost optimum weight}
We present our algorithm as a global algorithm and prove its correctness.
Thereafter, we describe the local implementation of this global algorithm.

Our global algorithm proceeds as follows.
In the first step, the algorithm adds all the edges between {\em heavy} vertices to the edges of the constructed spanning subgraph, $E'$, where a heavy vertex is defined to be a vertex of degree greater than some threshold.

It then runs the partition oracle on the graph induced on the vertices that are not heavy, i.e., {\em light} vertices and adds all the cut-edges of the partition to $E'$.

In the second step, each part of the partition is partitioned into subparts by running a controlled variant of Borůvka's algorithm for finding an MST on each part independently. 
The edges spanning the sub-parts are then added to $E'$.
Then, for each each sub-part, the algorithm adds a single edge to a single heavy vertex which is adjacent to the sub-part (assuming there is one).

We prove that all the edges added in the second step belong to the minimum spanning forest (MSF) of a graph which is $O(\eps/W_G)$-close to $G$, where $W_G$ denotes the maximum weight of an edge in $G$.
Additionally we prove that if we remove $O(\eps|V|/W_G)$ edges from $G$, then the weight of the minimum spanning forest (MSF) may increase by (at most) $O(\eps|V|)$. 

The second step partitions the vertices of the graph into {\em clusters} and {\em isolated parts},
where isolated parts are parts that are not adjacent to any heavy vertex, and the clusters are defined as follows.
Each cluster contains a single heavy vertex, which we refer to a the {\em center} of the cluster and sub-parts that are adjacent in the constructed graph to the center (each sub-part is adjacent to at most a single center).

In the third step the algorithm adds edges to $E'$ between pairs of cluster that are adjacent to each other.
For every edge $\{u, v\}$ which is adjacent to two different clusters, $A$ and $B$, the algorithm runs another algorithm that samples edges incident to $A$ and $B$ and returns the lightest one. The edge $\{u, v\}$ is added to $E'$ if it is lighter than the returned edge. 

We note that the algorithm that samples edges incident to a pair of specific clusters, $A$ and $B$ may not sample sufficient number of edges or may not return any edge (this is likely when the degree of both centers is large compared to the number of edges which are incident to both $A$ and $B$). 
In the analysis which is adapted from~\cite{LRR20}, we show that nonetheless, the number of edges added in the third step is sufficiently small with high probability.  The main idea is to consider the graph in which each cluster is contracted into a single vertex and then to analyse the sampling algorithm with respect to this graph which is also minor-free and hence has bounded arboricity~\footnote{The arboricity of a graph is the minimum number of forests into which its edges can be partitioned.}. The bounded arboricity of the contracted graph ensures that with high probability the sampling algorithm samples enough edges which are incident to $A$ and $B$ as long as the cut between these clusters is sufficiently large. On the other hand, if this is not the case then we show that we can afford to add to $E'$ all the edges in the cut. 

The local implementation of the above-mention global algorithm is quite straightforward and is presented in Section~\ref{sec:localimp}.

\subsection{Our algorithms for minor-free bounded degree graphs}
\subsubsection{Covering partition oracles}

We introduce a relaxed version of partition oracles which we call {\em covering partition oracles} and design such an oracle for minor-free graphs with query complexity $\poly(d/\eps)$. 
Given query access to a graph $G=(V,E)$ and parameter $\eps$ a partition oracle provides access to a partition of $V$, $\calP$, such that the size of each part of $\calP$ is small (usually polynomial in  $1/\eps$), the number of cut-edges of $\calP$ is at most $\eps|V|$ (with high probability) and $\calP$ is determined only by $G$ and the randomness of the oracle. On query $v\in V$ the oracle returns the part of $v$ in $\calP$.

A covering partition oracle has the same guarantees only that the requirement to return the part of $v$ on query $v\in V$ is relaxed as follows.
On query $v \in V$ the oracle is required to return a (small) subset $S$ such that $S$ contains the part of $v$ in $\calP$.

Our covering partition oracle builds on a central theorem from the recent work of~\cite{KSS19}. The theorem states that for any minor-free bounded degree graph there exists a partition of the graph into small parts with small number of cut-edges such that for each part of the partition, $P$, there exists a vertex, $s\in V$, such that if we preform sufficiently many (polynomial in $1/\eps$) lazy random walks from $s$ then with high probability we encounter all the vertices in $P$.
Building on this theorem we prove that the simple algorithm that on query $v\in V$ performs a set of lazy random walks from $v$ (of different lengths) and then performs a set of lazy random walks from each endpoint of a walk of the first set is a covering partition oracle.

The algorithms we in Subsections~\ref{sec:ham}-~\ref{sec:prop}, use our covering partition oracle. 

We note that since a partition oracle is a special case of a covering partition oracle (with stronger guarantees) all our algorithms for bounded degree graphs also work when one replaces calls to the covering partition oracle by calls to a partition oracle. 

As mentioned above, the use of covering partition oracles has two benefits. The first benefit is that the implementation of the covering partition oracles is much simpler and the second benefit is that the query complexity per oracle query is better (though both our covering partition oracle and the partition oracle of~\cite{KSS21} have query complexity which is $\poly(d/\eps)$), which consequently affects the query complexity of the algorithms. 

Another conceptual benefit in introducing covering partition oracles is that for some families of graphs the gap in the query complexity can be more dramatic. To give a concrete example consider $(\eps, \rho(\eps))$-hyperfinite graphs~\footnote{Let $\rho$ be a function from $\mathbb{R}_+$ to $\mathbb{R}_+$. A graph $G=(V,E)$ is $(\eps, \rho(\eps))$-hyperfinite if for every $\eps > 0$ it is possible to remove $\eps |V|$ edges of the graph such that the remaining graph has connected components of size at most $\rho(\eps)$.}.
It is straightforward to obtain a covering partition oracle with query complexity $O(d^{\rho(\eps)})$ for this family of graphs while the best known partition oracle for this family has query complexity which is $O(2^{d^{\rho(c\eps^3)}})$~\cite{HKNO09}, where $c$ is some constant.
We note that all our algorithms for bounded degree graphs work for any family of graphs for which there is a covering partition oracle (including $(\eps, \rho(\eps))$-hyperfinite graphs).

\subsubsection{Testing Hamiltonicity}\label{sec:ham}
In addition to relating the distance from Hamiltonicity of any graph $G= (V,E)$ to the size of its minimum path cover, as mentioned above, we also prove that given a subset $S \subset V$, if the size of the minimum path cover of $G[S]$ is greater than the number of edges in the cut of $S$ and $V\setminus S$ then $G$ is not Hamiltonian. Using this claim it becomes straightforward to design a one-sided error tester for Hamiltonicity that uses $O(1/\eps)$ queries to a partition oracle. We prove that it suffices to use the same number of queries to the covering partition oracle. We note that in this case there is a trade-off between the query complexity and time complexity. In particular while using the covering partition oracle rather than the partition oracle results in a better polynomial dependence of the query complexity it also results in a worse polynomial dependence in the exponent of the running time (in both cases the running time is exponential in $\poly(1/\eps)$ since we find the size of the minimum path cover by brute force~\footnote{Finding the minimum path cover is APX-hard since (as noted by Chandra Chekuri at stackexchange.com) we can reduce the TSP-path problem in metrics with distances $1$ and $2$ to it. The latter problem is APX-hard~\cite{EK01}).}).

\subsubsection{Constructing spanning subgraphs with almost optimum weight}\label{sec:MST}
As mentioned-above, given a weighted graph $G=(V, E, w)$ if we remove $O(\eps |V|/W_G)$ edges from $G$ then the weight of the MSF of the resulting graph may increase by at most $O(\eps |V|)$ compared to the weight of $G$.  
Thus given access to a partition of $V$ such that the subgraph induced on each part is connected and the number of cut-edges is $O(\eps |V|/W_G)$ we can proceed as follows. For each part of the partition we add to $E'$ the edges of the MST of the subgraph induced on this part. In addition, we add to $E'$ the cut-edges of the partition. Consequently, the total weight of the edges in $E'$ is greater than the weight of the MST of $G$ by at most $O(\eps |V|/W_G)$.
Hence, if on query $\{u, v\}$ we query the partition oracle on $u$ and $v$ then it is possible to determine whether $\{u, v\} \in E'$ where $E'$ is constructed as described above with respect to the partition of the oracle.
We prove that the same approach works when we preform the same queries to the covering partition oracle. 

\subsubsection{Testing monotone and additive properties}\label{sec:prop}
One of the main applications of the partition oracle is a general reduction for testing monotone and additive properties of bounded degree minor-free graphs. 
The idea of the reduction (from testing to the partition oracle)
is to sample $O(d/\eps)$ vertices and for each vertex $v$ in the sample to test whether the subgraph induced on the part of $v$ has the properties. The tester accepts if and only if all sampled parts pass the test. 
We prove that the same reduction works when we replace the queries to the partition oracle to queries by the covering partition oracle.

\ifnum\random=1
\subsection{Organization}
In Section~\ref{sec:unbounded} we present our algorithms for unbounded degree graphs. Due to space limitations, all our algorithms for bounded degree graphs, including our covering partition oracle, appear in the appendix.
An extended section on related work as well as omitted proofs and details of Section~\ref{sec:unbounded} appear in the appendix as well.  
\fi

\section{Preliminaries}\label{sec:prel}
In this section we introduce several definitions and some known results that will be used
in  the following sections.
Unless stated explicitly otherwise, we consider simple graphs, that is, with no
self-loops and no parallel edges. 

Let $G = (V,E)$ be a graph over $n$ vertices.
Each vertex $v\in V$ has an id, $id(v)$, where there is a full order over the ids.

The total order over the vertices induces a total order (ranking) $\rho$ over
the edges of the graph in the following straightforward manner:
$\rho((u,v)) < \rho((u',v))$ if and only if $\min\{u,v\} < \min\{u',v'\}$
or $\min\{u,v\} = \min\{u',v'\}$
and $\max\{u,v\} < \max\{u',v'\}$ (recall that $V = [n]$).
Thus, given a weighted graph, we may assume without loss of the generality that the weights of the edges are unique by breaking ties according to the order over the edges.

For a subset of vertices $X$, we let $G[X]$ denote the subgraph of $G$ induced by $X$.

When we consider bounded degree graphs, we consider the bounded-degree graph model~\cite{GR02}.
The graphs we consider have a known degree bound $d$,
and we assume we have query access to their incidence-lists representation. 
Namely, for any vertex $v$ and index $1 \leq i \leq d$ it is possible to obtain
the $i^{\rm th}$ neighbor of $v$ (where if $v$ has less than $i$ neighbors, then
a special symbol is returned). If the graph is edge-weighted, then the weight of
the edge is returned as well.

When we consider graphs with unbounded degree we consider the general graph model~\cite{PR02} equipped with an additional type of query: random neighbor query. Namely, when we query any given vertex $v$ a random neighbor of $v$ is returned~\footnote{We note that we do not use the random neighbor query in our tester for Hamiltonicity.}.

For a graph $G = (V,E)$ and two sets of vertices $V_1,V_2 \subseteq V$, we let
$E^G(V_1,V_2)$ denote the set of edges in $G$ with one endpoint in $V_1$ and one endpoint in $V_2$.
That is $E(V_1,V_2) \eqdef \{(v_1,v_2)\in E:\; v_1 \in V_1,v_2\in V_2\}$.
If $G$ is clear from the context we may omit it from the notation.

\subsection{Partition oracles and covering partition oracles}
\BD
For $\eps \in (0,1]$, $k \geq 1$ and a graph $G = (V,E)$,
we say that a partition $\calP = (V_1,\dots,V_t)$ of $V$ is an {\em $(\eps,k)$-partition}
(with respect to $G$), if the following conditions hold:
\BE
\item For every $1 \leq i \leq t$ it holds that $|V_i|\leq k$; 
\item For every $1 \leq i \leq t$ the subgraph induced by $V_i$
in $G$ is connected;
\item The total number of edges whose endpoints are in different parts of the partition
is at most $\eps|V|$ 
(that is, $\left|\left\{(v_i,v_j)\in E:\;v_i\in V_j,v_j\in V_j,i\neq j\right\}\right| \leq \eps|V|$).
\EE
\ED

Let $G=(V,E)$ be a graph and let $\calP$ be a partition of $V$. 
We denote by $g_{\calP}$ the function from $v\in V$ to $2^{V}$ (the set of all subsets of $V$), 
that on input $v\in V$, returns the subset $V_{\ell} \in \calP$ such that $v\in V_{\ell}$.
We denote the set of cut-edges of $\calP$ by $E^G_\calP = \{(u,v)\in E : g_{\calP}(v) \neq g_{\calP}(u)\}$ (we may omit $G$ from the notation when it is clear from the context). 
\BD[\cite{HKNO09}]\label{def:po}
An oracle $\mathcal{O}$ is a {\em partition oracle\/} if,
given query access to the incidence-lists representation of a graph
$G=(V,E)$, the oracle $\mathcal{O}$ provides query access to
a partition $\mathcal{P} = (V_1, \ldots, V_t)$ of $V$, 
where $\mathcal{P}$ is determined by $G$ and the internal randomness of the oracle. 
Namely, on input $v\in V$, the oracle returns $g_{\calP}(v)$
and for any sequence of queries, $\mathcal{O}$ answers consistently with the 
same $\calP$. 
An oracle $\mathcal{O}$ is an $(\eps,k)$-{\em partition oracle with respect
to a class of graphs $\mathcal{C}$}
if the partition $\mathcal{P}$ it answers according to has the following properties.
\BE
\item For every $V_{\ell} \in \calP$ , $|V_{\ell}| \leq k$ and the subgraph induced 
by $V_{\ell}$  in $G$ is connected.
\item If $G$ belongs to $\mathcal{C}$, then 
$|E_\calP| \leq \eps |V|$ with high constant probability,
where the probability is taken over the internal coin flips of $\mathcal{O}$.
\EE
\ED

\noindent
We consider the following relaxation of Definition~\ref{def:po}. 
\BD
An oracle $\mathcal{O}$ is a {\em covering partition oracle\/} if,
given query access to the incidence-lists representation of a graph
$G=(V,E)$, the oracle $\mathcal{O}$, on input $v\in V$, returns a subset $S\subseteq V$ such that $g_{\calP}(v) \subseteq S$ where $\mathcal{P}$ is a partition of $V$ determined by $G$ and the internal randomness of the oracle. 
For any sequence of queries, $\mathcal{O}$ answers consistently according to the 
same
$\calP$. 
An oracle $\mathcal{O}$ is an $(\eps, k)$-{\em covering partition oracle with respect
to a class of graphs $\mathcal{C}$}
if the following conditions hold.
\BE
\item On every query, the subgraph induced by the subset returned by  $\mathcal{O}$, $S$, is connected and $|S| \leq k$.
\item If $G$ belongs to $\mathcal{C}$, then with high probability,
$|E_\calP| \leq \eps |V|$,
where the probability is taken over the internal coin flips of $\mathcal{O}$.
\EE
\ED


\subsection{Graph minors}

Recall that a graph $R$ is called a {\em minor} of a graph $G$ if $R$ is isomorphic to a graph that can be obtained by zero or more edge contractions on a subgraph of $G$. A graph $G$ is {\em $R$-minor-free} if $R$ is not a minor of $G$.
We next quote two results that will play a central role in this work.
\BF\label{fct:forest}
Let $R$ be a fixed graph with $r$ edges.
For  every $R$-minor-free graph $G=(V,E)$ it holds that:
\begin{enumerate}
\item $|E| \leq r \cdot |V|$; 
\item $E$ can be partitioned into at most $r$ forests.  
\end{enumerate}
\EF

Unless stated otherwise, in all our algorithms, we assume that the input graph is $R$-minor-free graph where $R$ is a fixed graph with $r$ edges (we could receive $r$ as a parameter but we make this assumption for the sake of brevity). 

\subsection{Hamiltonian path and minimum path cover} 
\begin{definition}[Hamiltonian path]
A Hamiltonian path in $G=(V,E)$ is a path between two vertices of $G$ that visits each vertex of $G$ exactly once.
\end{definition}

\begin{definition}[minimum path cover]\label{def:mpc}
Given an undirected graph $G=(V,E)$, a path cover is a set of disjoint paths such that every vertex $v \in V$ belongs to exactly one path. The minimum path cover of $G$ is a path cover of $G$ having the least number of paths.
\end{definition}

\subsection{Local algorithms for constructing sparse spanning subgraphs}\label{sub:preMST}
\BD[\cite{LRR20}]
\label{dfn:SSG-alg}
An algorithm $\calA$ is a local sparse spanning graph (LSSG) algorithm if, given
$n\geq 1$, $\eps > 0$,
and query access to the incidence-lists representation of a connected graph $G=(V,E)$ over $n$ vertices,
it provides
oracle
 access to a subgraph $G'=(V, E')$ of $G$ such that:
\BE
\item\label{it:connect} $G'$ is connected. 
\item\label{it:internal-rand} $\s{E'} \leq (1+\eps)\cdot n$ with high constant probability~\footnote{In some papers the required success probability is high, i.e. at least
    $1-1/\Omega(n)$.},
where $E'$ is determined by $G$ and the internal randomness    of $\mathcal{A}$.
\EE
More specifically, on query ${u,v}\in E$,
    $\calA$ returns whether $(u,v) \in E'$, and for any
    sequence of edges, $\calA$ answers consistently with
    the same $G'$.

An algorithm $\calA$ is an {\em LSSG
algorithm for a family of graphs $\mathcal{C}$}
if the above conditions hold, provided that the input graph
$G$ belongs to $\mathcal{C}$.
\ED

\BD[\cite{LRR14}]\label{def:mwsg}
A local algorithm for $(1+\eps)$-approximating the minimum weight spanning graph of a graph $G=(V, E, w)$ with positive weights and $\min_{e\in E} w(e) \geq 1$, is a local algorithm for $(1+\eps)$-sparse spanning graph of $G=(V, E, w)$ for which the following holds:
$\sum_{e\in E'} w(e) \leq (1+\eps)\alpha$, where $\alpha$ is the weight of a minimum weight spanning tree of $G$.
\ED
For a graph $G=(V, E, w)$ we define $W_G = \max_{e\in E} w(e)$ (when it is clear from the context, we sometimes omit the subscript $G$). 
We denote by $\MSF(G)$ the set of edges the minimum-spanning-forest of $G$ (as mentioned above we assume without loss of generality that all weights are distinct and thus the minimum-spanning-forest is unique).
For a connect weighted graph we denote by $\MST(G)$ the set of edges the minimum-spanning-forest of $G$.
For a subset of edges $S\subseteq E$, we define $\w(S) \eqdef \sum_{e\in S} w(e)$.

\medskip
\noindent
Our algorithms build on the following rules.  
\begin{enumerate}
    \item The {\em cut rule} states that for any cut of the graph (a cut is a partition of the vertices into two sets), the lightest edge that crosses the cut must be in the MST.

\item The {\em cycle rule} states that if we have a cycle, the heaviest edge on that cycle cannot be in the MST.

\end{enumerate}

\section{Algorithms for minor-free graphs with unbounded degrees}\label{sec:unbounded}

\subsection{Testing Hamiltonicity}
 
In this section we prove the following Theorem.

\begin{theorem}\label{thm:hamU}
Given query access to an input graph $G=(V,E)$ where $G$ is a minor-free unbounded degree graph and parameters $\eps$ and $|V|$, there exists an algorithm that accepts $G$ with probability at least $2/3$ if $G$ is $\eps/2$-close to being Hamiltonian and rejects $G$ with probability at least $2/3$ if $G$ is $\eps$-far from being Hamiltonian.  
The query complexity of the algorithm is $\poly(1/\eps)$ and the running time is exponential in $\poly(1/\eps)$.
\end{theorem}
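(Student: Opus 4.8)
The plan is to implement the approximation-via-partition-oracle strategy sketched in the introduction and then convert the resulting estimate of the distance from Hamiltonicity into a two-sided tester. First I would invoke Claim~\ref{clm:pc}, which identifies $\dist(G)$ with $\mathrm{mpc}(G)-1$, where $\mathrm{mpc}(G)$ denotes the size of a minimum path cover of $G$. The task thus reduces to estimating $\mathrm{mpc}(G)$ up to an additive $O(\eps|V|)$. Next I would build the graph $\hat G$ as described: let $\heavy$ be the set of vertices of degree exceeding some threshold $\tau = \poly(1/\eps)$; since $G$ is $R$-minor-free, $|E|\le r|V|$ by Fact~\ref{fct:forest}, so $|\heavy| = O(r|V|/\tau) \le (\eps/c)|V|$ for an appropriate threshold. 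Delete all edges incident to $\heavy$, run the partition oracle of~\cite{KSS21} on the remaining (still minor-free, bounded-degree) graph with proximity parameter $\eps' = \Theta(\eps)$, and delete the at most $\eps'|V|$ cut-edges; call the result $\hat G$. By Claims~\ref{clm:edges} and~\ref{clm:vertices}, $\mathrm{mpc}(\hat G) \le \mathrm{mpc}(G) + O(\eps|V|)$, and trivially $\mathrm{mpc}(G) \le \mathrm{mpc}(\hat G)$ since $\hat G$ is a subgraph of $G$ on the same vertex set (removing edges can only increase the minimum path cover). Hence $|\mathrm{mpc}(\hat G) - \mathrm{mpc}(G)| = O(\eps|V|)$.

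The second step is to estimate $\mathrm{mpc}(\hat G)$ by sampling. Because $\hat G$ has all components of size at most $k = \poly(1/\eps)$ (the parts of the partition, further restricted by the heavy-vertex deletion), $\mathrm{mpc}(\hat G)$ decomposes additively over components: $\mathrm{mpc}(\hat G) = \sum_{\text{components } C} \mathrm{mpc}(\hat G[C])$. Sampling $s = \poly(1/\eps)$ vertices $v$ uniformly at random, recovering the component of $v$ in $\hat G$ via the partition oracle plus local BFS (each component has size $\le k$, so this costs $\poly(1/\eps)$ queries each), and computing $\mathrm{mpc}(\hat G[C_v])$ by brute force (exponential in $k = \poly(1/\eps)$, which accounts for the stated running time), I would form the estimator $\widehat{X} = \frac{|V|}{s}\sum_{i=1}^{s} \frac{\mathrm{mpc}(\hat G[C_{v_i}])}{|C_{v_i}|}$. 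Each term $\mathrm{mpc}(\hat G[C_{v_i}])/|C_{v_i}| \in [0,1]$ and has expectation $\mathrm{mpc}(\hat G)/|V|$, so by a Hoeffding bound $s = \Theta(1/\eps^2)$ samples give $|\widehat X - \mathrm{mpc}(\hat G)| \le (\eps/c)|V|$ with probability at least $2/3$. Combining with the previous paragraph, $\widehat X$ approximates $\mathrm{mpc}(G) = \dist(G)+1$ to within $O(\eps|V|)$.

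Finally, to get the tester promised in Theorem~\ref{thm:hamU}: accept iff $\widehat X - 1 \le \tfrac{3}{4}\eps|V|$ (with the constants in the sampling and partition steps tuned so the total additive error is below $\eps|V|/8$). If $G$ is $(\eps/2)$-close to Hamiltonian then $\dist(G) \le (\eps/2)|V|$, so $\widehat X - 1 \le (\eps/2 + \eps/8)|V| < \tfrac34\eps|V|$ and we accept with probability $\ge 2/3$; if $G$ is $\eps$-far then $\dist(G) \ge \eps|V|$, so $\widehat X - 1 \ge (\eps - \eps/8)|V| > \tfrac34\eps|V|$ and we reject with probability $\ge 2/3$. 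The query complexity is $\poly(1/\eps)$ (the partition oracle of~\cite{KSS21} answers each of the $s$ queries in $\poly(1/\eps)$ time, and component recovery is $\poly(1/\eps)$), and the running time is dominated by the brute-force path-cover computations on components of size $\poly(1/\eps)$, hence exponential in $\poly(1/\eps)$.

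The main obstacle, and the place where care is needed, is the two auxiliary claims~\ref{clm:edges} and~\ref{clm:vertices}: showing that deleting $O(\eps|V|)$ edges, and deleting all edges incident to $O(\eps|V|)$ vertices, each increases $\mathrm{mpc}$ by only $O(\eps|V|)$. The edge case is easy — deleting one edge splits at most one path in a path cover into two, so $\mathrm{mpc}$ grows by at most $1$ per deleted edge. The vertex case is the genuinely delicate one: removing all edges at a vertex $v$ can split the (at most one) path through $v$ and additionally isolate $v$, but one must argue this still costs only $O(1)$ per vertex in the path-cover count — so that $|\heavy| = O(\eps|V|)$ heavy vertices contribute only $O(\eps|V|)$. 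I would handle it by observing that in any path cover of $G$, deleting the edges at the vertices of a set $\heavy$ breaks the existing paths into at most $\sum_{v\in\heavy} (\deg_{\text{path}}(v)) \le 2|\heavy|$ additional pieces (each path vertex has path-degree $\le 2$), giving a path cover of the new graph of size $\mathrm{mpc}(G) + 2|\heavy|$; hence $\mathrm{mpc}$ after heavy-vertex edge deletion is at most $\mathrm{mpc}(G) + 2|\heavy| = \mathrm{mpc}(G) + O(\eps|V|)$, as needed. With these claims in hand the rest is the routine sampling-and-Hoeffding argument outlined above.
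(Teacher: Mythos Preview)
Your proposal is correct and follows essentially the same route as the paper: reduce to approximating the size of a minimum path cover via Claim~\ref{clm:pc}, pass to $\hat G$ by stripping heavy-vertex edges and partition cut-edges (controlled by Claims~\ref{clm:edges} and~\ref{clm:vertices}, whose proofs you reproduce), and estimate $\mathrm{mpc}(\hat G)$ by sampling vertices and averaging $\mathrm{mpc}(C_v)/|C_v|$ with a Hoeffding/Chernoff bound. The paper's Algorithm~\ref{alg:ham2} and Claim~\ref{clm:app} are precisely this, with the same threshold $\Theta(\eps^{-2})$ on the sample size and the same per-vertex weight $x_v = \mathrm{mpc}(G[S_v])/|S_v|$ (and $x_v=1$ for heavy $v$, matching your singleton components).
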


Theorem~\ref{thm:hamU} is a direct consequence of following claim (which is proved in the sequel).

\begin{claim}\label{clm:app}
Given an input graph $G=(V, E)$ which is a minor-free graph, and parameter $\eps$, Algorithm~\ref{alg:ham2} outputs a value $x$ such that with high constant probability $\delta_{\ham}(G) -\eps|V| \leq  x \leq \delta_{\ham}(G) +\eps|V|$. 
\end{claim}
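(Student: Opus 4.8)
\textit{Proof plan.}
The plan is to decompose the additive slack $\eps|V|$ into three contributions: the error incurred by replacing $G$ with the sparsified graph $\hat G$ that Algorithm~\ref{alg:ham2} implicitly works with, the error incurred by estimating $\delta_{\ham}(\hat G)$ from a uniform sample of vertices, and the (query-free, exponential-time) exact computation performed on each sampled component. All three are controlled after rescaling $\eps$ by a suitable constant.

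First I would use Claim~\ref{clm:pc} to replace $\delta_{\ham}(\cdot)$ everywhere by the size of the minimum path cover minus $1$. Since a path is connected, the minimum path cover of a graph is the disjoint union of the minimum path covers of its connected components, so $\delta_{\ham}(H)+1 = \sum_{C}|\mathrm{mpc}(H[C])|$ for every graph $H$, the sum ranging over the connected components $C$ of $H$. Next I would describe $\hat G$ concretely: let $\theta$ be the heaviness threshold; by Fact~\ref{fct:forest} we have $|E|\le r|V|$, so the set $\heavy$ of vertices of degree more than $\theta$ satisfies $|\heavy|\le 2r|V|/\theta$, which is $O(\eps|V|)$ once $\theta=\Theta(1/\eps)$. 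The graph $G[V\setminus\heavy]$ is minor-free with maximum degree at most $\theta=\poly(1/\eps)$, so the Kumar--Seshadhri--Stolman partition oracle~\cite{KSS21} applied to it produces, with high constant probability, an $(\eps,k)$-partition $(V_1,\dots,V_t)$ with $k=\poly(\theta/\eps)=\poly(1/\eps)$ and at most $\eps|V|$ cut edges, each query costing $\poly(1/\eps)$. The graph $\hat G$ is $G[V\setminus\heavy]$ with its cut edges removed together with the isolated vertices of $\heavy$; since every edge internal to a part $V_i$ is retained and $G[V_i]$ is connected, the connected components of $\hat G$ are exactly $V_1,\dots,V_t$ and the singletons $\{u\}$ for $u\in\heavy$ --- in particular every component has size at most $k$.

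Now $\hat G$ is obtained from $G$ by deleting $O(\eps|V|)$ edges (the cut edges) and all edges incident to the $O(\eps|V|)$ vertices of $\heavy$. Removing edges cannot decrease the minimum path cover, so $\delta_{\ham}(G)\le\delta_{\ham}(\hat G)$, while Claims~\ref{clm:edges} and~\ref{clm:vertices} give $\delta_{\ham}(\hat G)\le\delta_{\ham}(G)+O(\eps|V|)$; hence it suffices to estimate $\delta_{\ham}(\hat G)$ to within $O(\eps|V|)$. For a vertex $v$ let $C_v$ be its component in $\hat G$ and set $f(v)=|\mathrm{mpc}(\hat G[C_v])|/|C_v|\in[0,1]$; then $\mathbb{E}_{v}[f(v)]=\tfrac{1}{|V|}\sum_C|\mathrm{mpc}(\hat G[C])|=\tfrac{1}{|V|}\big(\delta_{\ham}(\hat G)+1\big)$. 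Algorithm~\ref{alg:ham2} samples $O(1/\eps^2)$ vertices uniformly, computes the empirical average of $f$, multiplies by $|V|$, and subtracts $1$; by Hoeffding's inequality the output is within $\pm\eps|V|$ of $\delta_{\ham}(\hat G)$ with high constant probability. To evaluate $f(v)$ on a sampled $v$ the algorithm queries the degree of $v$: if $v\in\heavy$ then $C_v=\{v\}$ and $f(v)=1$; otherwise it queries the partition oracle once to obtain the part $V_i\ni v$, queries the at most $k\theta=\poly(1/\eps)$ edges inside $V_i$ to reconstruct $G[V_i]=\hat G[C_v]$, and computes $|\mathrm{mpc}(\hat G[C_v])|$ by brute force over all partitions of the $\le k$ vertices into paths --- this is $\poly(1/\eps)$ queries and time exponential in $\poly(1/\eps)$. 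A union bound over the oracle's success event and the sampling event, followed by rescaling $\eps$, yields $\delta_{\ham}(G)-\eps|V|\le x\le\delta_{\ham}(G)+\eps|V|$ with high constant probability, which is Claim~\ref{clm:app}; Theorem~\ref{thm:hamU} then follows by thresholding $x$.

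The step I expect to be the main obstacle is the passage from $G$ to $\hat G$ and the bookkeeping around it: one must choose $\theta$ so that the number of heavy vertices, the part size $k$, and the per-query cost of the partition oracle are all $\poly(1/\eps)$ simultaneously, verify that the partition oracle may legitimately be invoked on the bounded-degree graph $G[V\setminus\heavy]$ (so no dependence on the unbounded degree of $G$ leaks in), check that the components of $\hat G$ really coincide with the parts together with the singletons from $\heavy$, and combine the oracle's constant failure probability with the sampling error through a clean union bound. The structural inputs --- Claim~\ref{clm:pc} and the robustness Claims~\ref{clm:edges} and~\ref{clm:vertices} --- are assumed here and carry most of the conceptual weight.
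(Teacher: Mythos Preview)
Your proposal is correct and follows essentially the same route as the paper's proof: define the sparsified graph $\hat G$ (the paper calls it $G'$) by deleting edges incident to heavy vertices and the cut edges of the partition on $G[\light]$, invoke Claims~\ref{clm:edges} and~\ref{clm:vertices} together with Fact~\ref{fct:forest} to show $\delta_{\ham}(G)$ and $\delta_{\ham}(\hat G)$ differ by $O(\eps|V|)$, observe that $\sum_{v} x_v$ equals the size of the minimum path cover of $\hat G$ (hence $\delta_{\ham}(\hat G)+1$), and apply a Hoeffding/Chernoff bound to the bounded $[0,1]$-valued variables $x_v$. One cosmetic discrepancy: Algorithm~\ref{alg:ham2} as written does not subtract $1$ after multiplying by $|V|$, so the output approximates $\delta_{\ham}(\hat G)+1$ rather than $\delta_{\ham}(\hat G)$ --- but this off-by-one is absorbed in the $\eps|V|$ slack, exactly as in the paper.
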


\ifnum\random=0
We next prove a couple of claims which we use in the proof of Claim~\ref{clm:app}. 
\else
We next state a couple of claims which we use in the proof of Claim~\ref{clm:app}.
\fi
\begin{claim}\label{clm:pc}
Let $G=(V,E)$ be a graph and let $k$ be the size of a minimum path cover of $G$. Then the distance of $G$ for being Hamiltonian, $\dist(G)$, is $k-1$.
\end{claim}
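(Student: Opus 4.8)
The plan is to prove the two inequalities $\dist(G)\le k-1$ and $\dist(G)\ge k-1$ separately, after first rewriting the distance in a form that talks directly about Hamiltonian paths. Write $n=|V|$. The first step is to argue that
$\dist(G)=\min_{P}\,\bigl|E(P)\setminus E(G)\bigr|$,
where the minimum ranges over all Hamiltonian paths $P$ on the vertex set $V$ (viewed as edge sets). The key observation here is that deletions never help: if $G'$ is any graph on $V$ that contains a Hamiltonian path $P$, then $G\cup P$ also contains $P$, and $E(G)\,\triangle\,E(G\cup P)=E(P)\setminus E(G)\subseteq E(G')\setminus E(G)\subseteq E(G)\,\triangle\,E(G')$. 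Hence an optimal edit can always be taken to add exactly the edges of some Hamiltonian path that are missing from $G$.

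For the lower bound, I would fix an arbitrary Hamiltonian path $P$ on $V$ and consider the subgraph $H=(V,\,E(P)\cap E(G))$. Since $H$ is a subgraph of the path $P$, it is a vertex-disjoint union of (possibly trivial) paths covering all of $V$, i.e.\ a path cover of $G$; if $H$ has $c$ connected components it has $n-c$ edges, so $\bigl|E(P)\setminus E(G)\bigr|=(n-1)-(n-c)=c-1$. As $c$ is the size of a path cover of $G$, we get $c\ge k$, hence $\bigl|E(P)\setminus E(G)\bigr|\ge k-1$; minimizing over $P$ gives $\dist(G)\ge k-1$.

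For the upper bound, I would start from a minimum path cover $Q_1,\dots,Q_k$ of $G$, choose an ordering of these paths and one endpoint of each, and add the $k-1$ edges joining consecutive paths at the chosen endpoints. Because the $Q_i$ are vertex-disjoint and cover $V$, this produces a simple Hamiltonian path $P$ on $V$, and $E(P)\setminus E(G)$ is contained in the set of these $k-1$ added edges, so $\bigl|E(P)\setminus E(G)\bigr|\le k-1$ and therefore $\dist(G)\le k-1$. Combining the two bounds yields $\dist(G)=k-1$.

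I do not expect a real obstacle here: the content is the first reformulation (establishing that the distance, which a priori permits both edge insertions and deletions, is attained by pure insertions of the edges of a Hamiltonian path) together with the clean bookkeeping identifying the components of $E(P)\cap E(G)$ with a path cover of $G$ and the edge count $n-c$ for a disjoint union of $c$ paths on $n$ vertices. The only things to be mildly careful about are degenerate cases (trivial single-vertex paths in a path cover, and small $n$), which are handled uniformly by the same argument.
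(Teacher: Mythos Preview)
Your proof is correct and follows essentially the same approach as the paper's: both prove the upper bound by concatenating a minimum path cover with $k-1$ new edges, and both prove the lower bound by observing that a Hamiltonian path intersected with $E(G)$ (equivalently, after deleting the added edges) yields a path cover of $G$. The one genuine addition you make is the explicit reformulation $\dist(G)=\min_P|E(P)\setminus E(G)|$, i.e.\ the justification that deletions never help; the paper simply asserts ``there exist $\dist(G)$ edges such that when added to $G$, $G$ becomes Hamiltonian'' without arguing this point.
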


\def\proofclaim{
Let $G=(V,E)$ be a graph and let $\mathcal{C} = \{P_1,\ldots, P_k\}$ be a minimum path cover of $G$.

We first prove that $\dist(G) \leq k-1$.
For every $1\le i \le k-1 $ we add an edge which connects the end-vertex of $P_i$ to the start-vertex of $P_{i+1}$. Thus, by adding $k-1$ edges we constructed a Hamiltonian path in $G$.

We next prove that $\dist(G) \geq k-1$.
By definition, there exist $\dist(G)$ edges such that when added to $G$, $G$ becomes Hamiltonian.
Let $E'$ denote a set of $\dist(G)$ such edges and let $G' = (V, E\cup E')$ be the graph resulting from adding these edges to $G$.
Let $\mathcal{H} = (s_1, \ldots, s_{|V|})$ denote a Hamiltonian path in $G'$. 
After we remove back the edges in $E'$ we break $\mathcal{H}$ into $|E'|+1$ connected components (each edge we remove adds an additional connected component), i.e. into $|E'|+1$ paths. Thus the size of the minimum path cover of $G$ is at most $|E'|+1 = \dist(G) +1$. Thus $k \leq \dist(G) +1$ and so $\dist(G) \geq k-1$ as desired.
}

\ifnum\random=0
\begin{proof} 
\proofclaim
\end{proof}
\fi

\ifnum\random=0
The next claims state that if we remove $\eps$-fraction of the edges or edges that are incident to $\eps$-fraction of the vertices then the distance from being Hamiltonian is increased by at most $O(\eps |V|)$.
\fi

\begin{claim}\label{clm:edges}
Let $G= (V, E)$ be a graph and let $F \subseteq E$ be a subset of edges.  
Then $$\delta_{\ham}(G) \leq \delta_{\ham}(G') \leq \delta_{\ham}(G) + |F|,$$ where $G' = (V, E')$ and $E' = E\setminus F$. 
\end{claim}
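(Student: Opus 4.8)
The plan is to establish the two inequalities in Claim~\ref{clm:edges} separately, both using the characterization from Claim~\ref{clm:pc} that $\delta_{\ham}(H)$ equals the size of a minimum path cover of $H$ minus $1$. Equivalently, since adding/removing a single edge changes the relevant quantities by at most $1$, it suffices to prove the statement when $|F| = 1$, i.e. when $G'$ is obtained from $G$ by deleting a single edge $e$, and then iterate over the $|F|$ edges of $F$. So fix $e \in E$ and let $G' = (V, E \setminus \{e\})$.

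For the left inequality $\delta_{\ham}(G) \le \delta_{\ham}(G')$: any set of edges whose addition to $G'$ yields a graph containing a Hamiltonian path also, when added to $G \supseteq G'$, yields a graph containing that same Hamiltonian path (adding edges never destroys a Hamiltonian path). Hence $\delta_{\ham}(G) \le \delta_{\ham}(G')$. (Iterating over the edges of $F$ one at a time gives the general statement; since this direction does not even lose anything per edge, it holds directly for any $F$.)

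For the right inequality $\delta_{\ham}(G') \le \delta_{\ham}(G) + 1$: by Claim~\ref{clm:pc} it is enough to show that the minimum path cover of $G'$ has size at most one more than that of $G$. Take a minimum path cover $\mathcal{C} = \{P_1,\dots,P_k\}$ of $G$, where $k = \delta_{\ham}(G)+1$. If the deleted edge $e$ is not used by any $P_i$, then $\mathcal{C}$ is still a valid path cover of $G'$ and we are done with no loss. Otherwise $e$ lies on exactly one path, say $P_j$; removing $e$ from $P_j$ splits $P_j$ into (at most) two subpaths, each of which is still a path in $G'$, and the remaining $P_i$ ($i \ne j$) are untouched. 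This produces a path cover of $G'$ of size at most $k+1$. Therefore the minimum path cover of $G'$ has size at most $k+1$, so $\delta_{\ham}(G') \le k+1 - 1 = \delta_{\ham}(G) + 1$. Iterating this over the $|F|$ edges of $F$ (removing them one by one, each step increasing the path-cover size by at most $1$) yields $\delta_{\ham}(G') \le \delta_{\ham}(G) + |F|$.

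There is no real obstacle here; the only point requiring a little care is making sure the iteration is legitimate — at each intermediate stage the graph is still just a simple graph on $V$, Claim~\ref{clm:pc} still applies, and the single-edge-deletion bound composes additively. One could also argue the right inequality directly without Claim~\ref{clm:pc}, mirroring its proof: take $E''$ realizing $\delta_{\ham}(G')$ plus the single extra edge $e$; then $E'' \cup \{e\}$ added to $G'$ gives back (a supergraph of) $G$ plus $E''$, which is Hamiltonian, so $\delta_{\ham}(G) \le \delta_{\ham}(G') + 1$ — but the path-cover route is cleaner and reuses what is already proved.
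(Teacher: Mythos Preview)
Your proof is correct and follows essentially the same approach as the paper: both argue the left inequality by monotonicity under edge removal, and both argue the right inequality by taking a minimum path cover of $G$ and observing that removing the edges of $F$ one by one increases the number of paths by at most one per edge, then applying Claim~\ref{clm:pc}. (A small remark: in your closing parenthetical alternative you have $G$ and $G'$ swapped---one should take $E''$ realizing $\delta_{\ham}(G)$ and conclude $\delta_{\ham}(G') \le \delta_{\ham}(G)+1$---but this does not affect your main argument.)
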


\def\proofclaimTwo{
The claim that $\delta_{\ham}(G) \leq \delta_{\ham}(G')$ follows from the fact that the distance from being Hamiltonian can not decrease when we remove edges.  

Let $\calC$ be a minimum path cover of $G$.
By Claim~\ref{clm:pc}, $\delta_{\ham}(G) = |\calC|-1$.
Now consider removing the edges in $F$ one by one and how this affects the number of paths in $\calC$.
After removal of a single edge, the number of paths may increase by at most one. 
Thus, after removing all the edges in $F$ the paths in $\calC$ break into at most $|\calC| + |F|$ paths. 
Thus the size of the minimum path cover of $G'$ is at most $|\calC| + |F|$. By claim~\ref{clm:pc}, $\delta_{\ham}(G') \leq |\calC| + |F| - 1 = \delta_{\ham}(G) + |F|$, as desired.
}

\ifnum\random=0
\begin{proof} 
\proofclaimTwo
\end{proof}
\fi

\begin{claim}\label{clm:vertices}
Let $G= (V, E)$ be a graph and let $S \subseteq V$ be a subset of vertices.  
Then $$\delta_{\ham}(G) \leq \delta_{\ham}(G') \leq \delta_{\ham}(G) + 2|S|,$$ where $G' = (V, E')$ and $E'$ is the set of edges in $E$ that are not incident to vertices in $S$. 
\end{claim}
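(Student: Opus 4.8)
The plan is to reduce Claim~\ref{clm:vertices} to Claim~\ref{clm:edges} by bounding the number of edges whose removal transforms $G$ into $G'$. The left inequality $\delta_{\ham}(G) \leq \delta_{\ham}(G')$ is immediate, since $G'$ is a subgraph of $G$ on the same vertex set and the distance from being Hamiltonian cannot decrease when edges are removed (any set of edges making $G'$ Hamiltonian also makes $G$ Hamiltonian). For the right inequality, observe that $E' = E \setminus F$ where $F$ is exactly the set of edges of $G$ that are incident to at least one vertex of $S$. Then I would invoke Claim~\ref{clm:edges} to get $\delta_{\ham}(G') \leq \delta_{\ham}(G) + |F|$, so it remains to show $|F| \leq 2|S|$.

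The key step is therefore the bound $|F| \leq 2|S|$, and here I would use the fact that $G$ is minor-free: by Fact~\ref{fct:forest}, an $R$-minor-free graph on any vertex set has at most $r$ times as many edges as vertices. More precisely, I would apply this to the subgraph $G[S]$ and to the bipartite-type structure between $S$ and $V \setminus S$. Every edge in $F$ either has both endpoints in $S$ or exactly one endpoint in $S$; the number of the former is $|E(G[S])| \leq r|S|$ by Fact~\ref{fct:forest}, and the number of the latter is $|E^G(S, V\setminus S)|$, which is also at most $r|S|$ since the subgraph of $G$ on vertex set $V$ consisting only of those cut edges is still $R$-minor-free and, after discarding isolated vertices, has at most... — wait, this would give $|F| \leq 2r|S|$, not $2|S|$.

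Here I expect the main obstacle: the stated constant is $2|S|$, not $2r|S|$, so the proof cannot simply go through Claim~\ref{clm:edges} with a crude edge count. The right approach is instead to argue directly at the level of path covers, mirroring the proof of Claim~\ref{clm:edges} but tracking vertices rather than edges. Start from a minimum path cover $\calC$ of $G$ with $|\calC| = \delta_{\ham}(G) + 1$ paths (Claim~\ref{clm:pc}). Now delete the vertices of $S$ from these paths one at a time: deleting a single vertex $s$ from a path splits that path into at most two subpaths and leaves $s$ as an isolated vertex (itself a trivial path of one vertex), so the total number of paths increases by at most $1$ for the split plus we must still cover $s$ — but $s$ is now a singleton path, and once all of $S$ is processed, the $|S|$ singletons together with the at-most-$|S|$ extra splits give a path cover of $G' $... — again this yields $|\calC| + 2|S|$ paths covering all of $V$ using only edges of $E'$ (the singletons use no edges; the subpaths use only edges not incident to $S$). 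Hence the minimum path cover of $G'$ has size at most $|\calC| + 2|S|$, and by Claim~\ref{clm:pc}, $\delta_{\ham}(G') \leq |\calC| + 2|S| - 1 = \delta_{\ham}(G) + 2|S|$, as desired. The delicate point to get exactly right is the bookkeeping of how many new paths each vertex deletion creates: removing an endpoint of a path creates one subpath plus one singleton (net $+1$), removing an interior vertex creates two subpaths plus one singleton (net $+2$), and removing the sole vertex of a singleton path creates one singleton (net $0$); in all cases the increase is at most $2$, which is where the factor $2$ comes from, and this argument does not need minor-freeness at all.
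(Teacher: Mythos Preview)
Your final argument---starting from a minimum path cover $\calC$ of $G$ and observing that removing the edges incident to each vertex of $S$ increases the number of paths by at most two (since the path containing that vertex splits into at most three pieces: the singleton and at most two subpaths)---is exactly the paper's proof. The initial detour through Claim~\ref{clm:edges} is correctly abandoned; as you note at the end, the bound $2|S|$ holds for arbitrary graphs and does not require minor-freeness.
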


\def\proofclaimThree{
The proof of this claim is similar to the proof of Claim~\ref{clm:edges}.

The claim that $\delta_{\ham}(G) \leq \delta_{\ham}(G')$ follows from the fact that the distance from being Hamiltonian can not decrease when we remove edges.  

Let $\calC$ be a minimum path cover of $G$.
By Claim~\ref{clm:pc}, $\delta_{\ham}(G) = |\calC|-1$.
Now consider removing the edges adjacent to vertices in $S$ vertex by vertex and how this affects the number of paths in $\calC$.
After removal of edges incident to a specific vertex, the number of paths may increase by at most two.
This follows from the fact that each vertex $v$ belongs to exactly one path, $P$, and the fact that when the edges incident to $v$ are removed, $P$ may break into at most $3$ different paths.   
Thus, after removing all the edges incident to vertices in $S$ the paths in $\calP$ break into at most $|\calC| + 2|S|$ paths. 
Thus the size of the minimum path cover of $G'$ is at most $|\calC| + 2|S|$. By claim~\ref{clm:pc}, $\delta_{\ham}(G') \leq |\calC| + 2|S| - 1 = \delta_{\ham}(G) + 2|S|$, as desired.
}

\ifnum\random=0
\begin{proof} 
\proofclaimThree
\end{proof}
\fi

\begin{algorithm}
\caption{Approximating the distance to Hamiltonicity in minor-free, unbounded degree, graphs\label{alg:ham2}}
\textbf{Input:} Oracle access to a minor-free, unbounded-degree, graph $G=(V,E)$\\
\textbf{Output:} $(1+\eps)$-approximation to $\delta_{\ham}(G)$
\BE
\item Define $\Delta \eqdef 8 c(h)/\eps$, $H$ to be the set of vertices of degree greater than $\Delta$, and $L \eqdef V\setminus H$.  
\item Sample a set $S$ of $y = \Theta(1/\eps^2)$ vertices u.a.r.
\item For each vertex $v \in S$:\label{step:ham2:1}
\BE
\item If $v \in L$ then:  
\BE
\item Query the partition oracle on $v$ with parameter $\eps/4$ with respect to the graph $G[L]$. Let $S_v$ denote the returned set.\label{step.po}
\item Set $x_v = k/|S_v|$ where $k$ is the size of the minimum path cover of $G[S_v]$.
\EE 
\item Otherwise, set $x_v = 1$.
\EE
\item Output $\frac{\sum_{v\in S} x_v}{|S|} \cdot |V|$. \label{step3ham}
\EE

\end{algorithm}

\begin{proof}[Proof of Claim~\ref{clm:app}]

Let $\calP$ denote the partition for which the partition oracle executed in Step~\ref{step.po} answers according to. 
With high constant probability $|E_{\calP}| \leq \frac{\eps|V|}{4}$.
Let $E_1$ denote this event. 

Let $G'=(V, E')$ be the graph such that $E'$ is the set of edges that are not incident to vertices in $H$ and are not in $E_{\calP}$.
By Claims~\ref{clm:edges} and~\ref{clm:vertices}, 
$$
\delta_{\ham}(G) \leq \delta_{\ham}(G') \leq \delta_{\ham}(G) + |E_{\calP}| + 2|H|. 
$$
By Markov's inequality and Fact~\ref{fct:forest}, $|H| \leq \frac{\eps |V|}{4}$.
We prove that, conditioned that $E_1$ occurs,  Algorithm~\ref{alg:ham2} outputs with high constant probability a $(1+\eps)$-approximation to $\delta_{\ham}(G')$. 

For each $v\in V$ define the random variable $x_v$ as defined in Step~\ref{step:ham2:1} of Algorithm~\ref{alg:ham2}.
Let $T \in \calP$ be a part in $\calP$, then $\sum_{v\in T} x_v = k$ where $k$ is the minimum path cover of $G[T]$.
Thus $\sum_{v\in V} x_v$ is the minimum path cover of $G'$.
Since for every $v\in V$, $x_v \in (0, 1]$, it follows by the additive Chernoff's bound that with high constant probability $\left|\frac{\sum_{v \in S} x_v}{|S|} - \frac{\sum_{v\in V} x_v|}{|V|}\right| \leq \frac{\eps}{4}$.
Thus, with high constant probability,
$$\delta_{\ham}(G) - \frac{\eps |V|}{4} \leq \frac{\sum_{v \in S} x_v}{|S|} \cdot |V| \leq \delta_{\ham}(G) + \eps|V|,$$
as desired.
\end{proof}

\subsection{A Local algorithm for constructing a spanning subgraph with almost optimum weight}\label{sec:MSTU}

In this section we prove the following theorem.

\begin{theorem}\label{thm:MSTU}
There exists a local algorithm for $(1+\eps)$-approximating the minimum weight spanning graph for the family of unbounded degree minor-free graphs, with positive weights and minimum weight which is at least $1$. The query complexity and time complexity of the algorithm is $\poly(W/\eps)$ where $W$ is an upper bound on the maximum weight.
The algorithm receives $\eps$ and $W$ as parameters.
\end{theorem}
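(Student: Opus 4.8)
The plan is to prove Theorem~\ref{thm:MSTU} by first exhibiting a \emph{global} algorithm that, given all of $G=(V,E,w)$, produces a fixed spanning subgraph $G'=(V,E')$ meeting the requirements of Definitions~\ref{dfn:SSG-alg} and~\ref{def:mwsg} --- $G'$ connected, $\s{E'}\le(1+\eps)n$, $\w(E')\le(1+\eps)\MST(G)$, and in fact $\s{E'\setminus\MST(G)}=O(\eps n/W)$ --- and then showing that, on a queried edge $\{u,v\}$, membership in $E'$ can be decided by a local procedure reading only $\poly(W/\eps)$ bits of $G$, with all queries answered consistently with the same $E'$ by fixing a single short random seed (as in \cite{LRR20}).

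The global algorithm has three steps. \textbf{Step 1.} Set $\Delta=\poly(W/\eps)$, let $H=\{v: v\text{ has degree}>\Delta\}$ be the \emph{heavy} vertices and $L=V\setminus H$; put all of $E^G(H,H)$ into $E'$, run an $(\eps',k)$-partition oracle on $G[L]$ with $\eps'=\Theta(\eps/W)$ and $k=\poly(W/\eps)$, and add its cut-edges $E_{\calP}$ to $E'$. \textbf{Step 2.} On each part $P\in\calP$ run a controlled version of Bor\r{u}vka's algorithm on $G[P]$ that lets a growing component merge along its cheapest incident edge but stops it once it reaches size $\Theta(W/\eps)$ (so every resulting \emph{sub-part} either has $\Omega(W/\eps)$ vertices or equals all of $P$); add the selected forest edges to $E'$, and for every sub-part with a neighbour in $H$ add one edge from it to such a neighbour, partitioning $V$ into \emph{clusters} (a heavy centre plus its assigned sub-parts) and \emph{isolated parts}. \textbf{Step 3.} For every $\{u,v\}\in E$ whose endpoints lie in two different clusters $A,B$, run a bounded sampling procedure over the edges incident to the two centres; it returns (usually) the lightest edge of $E^G(A,B)$ it encountered, and $\{u,v\}$ is kept in $E'$ iff it is lighter than that edge (and kept unconditionally if the procedure returned nothing).

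For correctness, connectivity holds because the Bor\r{u}vka forest edges span each sub-part, the cut-edges $E_{\calP}$ reconnect the parts inside each component of $G[L]$, each sub-part with a heavy neighbour is attached to its centre, and --- as in \cite{LRR20} --- contracting every cluster and every isolated part yields a graph on which the Step-3 edges realize a spanning connected structure. The weight bound follows from bounding $\s{E'\setminus\MSF(G)}$: by Fact~\ref{fct:forest}, $|H|\le 2r n/\Delta=O(\eps n/W)$ and hence $|E^G(H,H)|\le r|H|=O(\eps n/W)$; $|E_{\calP}|\le\eps' n=O(\eps n/W)$ with high probability; the sub-part-to-centre edges together with the Step-3 edges number $O(\eps n/W)$ with high probability (the crux, below); and the remaining edges of $E'$ --- the Bor\r{u}vka forest edges --- lie in $\MSF(\hat G)$, where $\hat G$ is $G$ with those $O(\eps n/W)$ boundary edges deleted. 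Since $\hat G$ and $G$ differ in $O(\eps n/W)$ edges, a single-edge-exchange argument gives $|\MSF(\hat G)\triangle\MSF(G)|=O(\eps n/W)$, so $\s{E'\setminus\MSF(G)}=O(\eps n/W)$, and (as $G$ is connected) $\MSF(G)=\MST(G)$; thus $\s{E'}\le (n-1)+O(\eps n/W)\le(1+\eps)n$, and since every edge has weight at most $W$,
\[
\w(E')\;\le\;\w(E'\cap\MST(G))+W\cdot\s{E'\setminus\MST(G)}\;\le\;\MST(G)+O(\eps n)\;\le\;(1+O(\eps))\,\MST(G),
\]
using $\MST(G)\ge n-1$ (all weights $\ge 1$); rescaling $\eps$ by a constant yields the claimed bounds. (The same bookkeeping also gives $\w(\MSF(\hat G))\le\MST(G)+O(\eps n)$ directly, via $\MSF(\hat G)=\MSF(G)-(F\cap\MSF(G))+(\text{replacements})$.)

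The main obstacle is the Step-3 estimate. Because a centre may have degree far exceeding $|E^G(A,B)|$, the sampler cannot scan all its incident edges and may return no edge of $E^G(A,B)$, forcing us to keep all of it; we must show this is rare in aggregate. Adapting \cite{LRR20}, pass to the \emph{cluster graph} $\bar G$ obtained by contracting each cluster (and each isolated part); $\bar G$ is a minor of $G$, hence minor-free, hence of arboricity $O(r)$, and has only $O(|H|)=O(\eps n/W)$ cluster-vertices, so $O(\eps n/W)$ cluster--cluster super-edges. For a super-edge $(A,B)$: if $|E^G(A,B)|$ is large relative to the centres' degrees then with good probability the sampler sees $\Omega(1)$ of these edges and keeps only a small fraction of $E^G(A,B)$ (each kept edge charged to being a sampled minimum), while if $|E^G(A,B)|$ is small we keep all of it, and the total over such super-edges is $O(\eps n/W)$ by the arboricity bound; a Markov/union bound over the $O(\eps n/W)$ super-edges gives the high-probability $O(\eps n/W)$ bound, and a similar argument controls the sub-part-to-centre edges. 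Finally, for the local implementation: a light vertex has degree $\le\Delta$, so every part $P$ of $\calP$ and its Bor\r{u}vka sub-parts live inside a $\poly(W/\eps)$-size ball explorable with $\poly(W/\eps)$ queries; deciding $\{u,v\}\in E'$ needs to identify the clusters of $u$ and $v$ (a constant number of partition-oracle calls plus Bor\r{u}vka on $\poly(W/\eps)$ vertices) and, in Step 3, to run the sampler --- which uses random-neighbour queries --- with its randomness read from a seed of $\poly(W/\eps)\log n$ bits shared across all queries, ensuring global consistency. Altogether the query and time complexity are $\poly(W/\eps)$.
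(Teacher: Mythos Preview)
Your overall architecture matches the paper's almost step for step, but your Bor\r{u}vka stopping criterion differs in a way that breaks the weight analysis. You stop a sub-part once it reaches size $\Theta(W/\eps)$ and then assert that the resulting Bor\r{u}vka forest edges lie in $\MSF(\hat G)$ for some $\hat G$ that differs from $G$ in only $O(\eps n/W)$ edges. With a purely size-based rule this is false: since you run Bor\r{u}vka on $G[P]$, the edge $e_B$ chosen by a sub-part $B$ is only the lightest edge in $E^G(B,P\setminus B)$, not the lightest edge in the cut of $B$ in $\hat G$ --- there may be a much lighter edge from $B$ to a heavy vertex, and such edges survive in any $\hat G$ obtained by deleting only $O(\eps n/W)$ edges (the edges touching $H$ can number $\Omega(n)$). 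Concretely, take one heavy vertex $h$ joined by weight-$1$ edges to every light vertex, and let $G[L]$ be a long path with weight-$W$ edges (this is planar, hence minor-free). Then $\MST(G)$ uses only the star edges to $h$ and has weight $\approx n$, while your Bor\r{u}vka selects $\approx n$ internal weight-$W$ edges, giving $w(E')=\Omega(nW)$; the $(1+\eps)$-approximation fails by a factor of $W$.

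The paper's fix is a \emph{weight}-based stopping rule (Algorithm~\ref{alg:sub-parts}): a sub-part $B$ stays active only while its lightest edge inside the part is lighter than its lightest edge to $H$. Under this rule each Bor\r{u}vka edge, and each sub-part-to-centre edge (taken to be the lightest edge from the sub-part to $H$), is the minimum edge of the cut $(B,V\setminus B)$ in $\hat G=(V,E\setminus E_{\calP})$ and hence lies in $\MSF(\hat G)$ by the cut rule (Claim~\ref{claim:edges3}); since $\hat G$ differs from $G$ only in $|E_{\calP}|=O(\eps n/W)$ edges, Claim~\ref{claim:almost} gives $|\MSF(\hat G)\setminus\MSF(G)|=O(\eps n/W)$. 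Note that with this rule the number of sub-parts (and hence of centre edges) can be $\Omega(n)$, so your separate $O(\eps n/W)$ count of centre edges is no longer available --- but it is not needed, because all Step~\ref{global:step3} edges are already accounted for inside $\MSF(\hat G)$. Your cluster-sampling sketch for Step~\ref{global:step4} is otherwise along the right lines.
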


\ifnum\random=0
Before we describe our algorithm and prove its correctness we prove the following useful claim which states that if we remove $\eps$-fraction of the edges then the weight of the MSF of the resulting graph does not increase by much (compared to the original graph).
\else
Our algorithm builds on the following claim.
\fi

\begin{claim}\label{claim:almost}
Let $G = (V, E, w)$ be a weighted graph and let $G' = (V, E', w)$ be a graph such that $E' = E \setminus S$ where $S\subseteq E$.
Then $\w(\MSF(G')) \leq \w(\MSF(G)) + |S| W_G$.  
Moreover, $|\MSF(G')\setminus \MSF(G)| \leq |S|$.
\end{claim}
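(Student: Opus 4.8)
For a weighted graph $G=(V,E,w)$ and $G'=(V,E',w)$ with $E'=E\setminus S$, we have $\w(\MSF(G'))\le \w(\MSF(G))+|S|W_G$ and $|\MSF(G')\setminus\MSF(G)|\le |S|$.

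The plan is to remove the edges of $S$ one at a time and track how the minimum spanning forest changes under a single edge deletion; both bounds then follow by summing (resp. a telescoping count) over the $|S|$ steps. So it suffices to treat the case $S=\{e\}$, i.e. $G'=(V,E\setminus\{e\},w)$, and prove: $\w(\MSF(G'))\le \w(\MSF(G))+W_G$ and $|\MSF(G')\setminus\MSF(G)|\le 1$. (Here one should be slightly careful that deleting an edge can only merge minimum-spanning-forest connected components, not split them, so the relevant object at each stage is still an MSF of the current graph; since $G'$ has a subset of the vertices of $G$ identical and fewer edges, its connected components are unions of those of $G$, and the MSF is taken component-wise.)

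First I would handle the single-edge case by splitting on whether $e\in\MSF(G)$. If $e\notin\MSF(G)$, then $\MSF(G)$ is still a spanning forest of $G'$ (every edge of $\MSF(G)$ survives, and it still spans each component of $G'$ since removing a non-tree edge does not disconnect anything), hence by uniqueness/minimality $\MSF(G')$ has weight at most $\w(\MSF(G))$, and in fact one can argue $\MSF(G')=\MSF(G)$ here via the cycle rule — deleting a non-MSF edge changes nothing — so both bounds hold trivially (the weight does not increase, and the symmetric difference is empty). If $e=\{u,v\}\in\MSF(G)$, then $\MSF(G)\setminus\{e\}$ is a spanning forest of $G'$ on each component except that the component of $G$ containing $e$ may have been split into two pieces $U\ni u$ and $V'\ni v$ — unless $G'$ still connects them, in which case we need one replacement edge. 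If in $G'$ the pieces $U$ and $V'$ are separated, then $\MSF(G)\setminus\{e\}$ is already a spanning forest of $G'$, its weight is $\w(\MSF(G))-w(e)\le\w(\MSF(G))+W_G$, and $\MSF(G')=\MSF(G)\setminus\{e\}$ by an exchange argument, so $|\MSF(G')\setminus\MSF(G)|=0$. Otherwise there is at least one edge of $E'$ crossing from $U$ to $V'$; let $f$ be the lightest such edge. Then $T':=(\MSF(G)\setminus\{e\})\cup\{f\}$ is a spanning forest of $G'$, so $\w(\MSF(G'))\le \w(T')=\w(\MSF(G))-w(e)+w(f)\le \w(\MSF(G))+w(f)\le\w(\MSF(G))+W_G$, giving the weight bound. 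For the structural bound I would invoke the cut rule: $f$ is the minimum-weight edge of $G'$ crossing the cut $(U,V')$, hence $f\in\MSF(G')$, and moreover a standard argument shows $\MSF(G')=(\MSF(G)\setminus\{e\})\cup\{f\}$ exactly (every other edge of $\MSF(G)$ still satisfies the cycle-rule characterization in $G'$, and no other new edge can enter), so $\MSF(G')\setminus\MSF(G)=\{f\}$ and $|\MSF(G')\setminus\MSF(G)|\le 1$.

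Finally I would assemble the general case. Order $S=\{e_1,\dots,e_m\}$ arbitrarily and set $G_0=G$, $G_i=(V,E\setminus\{e_1,\dots,e_i\},w)$, so $G_m=G'$. By the single-edge case, $\w(\MSF(G_i))\le \w(\MSF(G_{i-1}))+W_G$ (noting $W_{G_i}\le W_G$), so by induction $\w(\MSF(G'))=\w(\MSF(G_m))\le \w(\MSF(G))+mW_G=\w(\MSF(G))+|S|W_G$. For the symmetric-difference bound, each step changes the MSF by at most one edge, i.e. $|\MSF(G_i)\setminus\MSF(G_{i-1})|\le 1$, and $\MSF(G')=\MSF(G_m)$ differs from $\MSF(G_0)=\MSF(G)$ by edges each of which was introduced at some step, so $|\MSF(G')\setminus\MSF(G)|\le \sum_{i=1}^m |\MSF(G_i)\setminus\MSF(G_{i-1})|\le m=|S|$. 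The main obstacle, and the only place demanding care, is the bookkeeping in the single-edge deletion case — specifically ensuring via the cut and cycle rules that the new MSF is obtained from the old one by a single well-defined swap, rather than potentially reorganizing many edges — but the rules quoted in Subsection~\ref{sub:preMST} give exactly what is needed, and uniqueness of the MSF (from the distinct-weights assumption) makes the exchange argument clean.
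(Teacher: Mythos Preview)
Your proof is correct, but the approach differs from the paper's. The paper gives a direct, non-inductive argument: it observes via the cycle rule that any edge of $E'\setminus\MSF(G')$ is heaviest on some cycle of $G'$, hence also of $G$, so it lies outside $\MSF(G)$ as well; this gives $\MSF(G)\subseteq\MSF(G')\cup S$. Combining this with $|\MSF(G')|\le|\MSF(G)|$ (since $G$ has no more connected components than $G'$) yields $|\MSF(G')\setminus\MSF(G)|\le|\MSF(G)\setminus\MSF(G')|\le|S|$, and the weight bound follows immediately. Your edge-by-edge exchange argument is equally valid and perhaps more transparent about \emph{which} edge enters the forest at each step, but it is longer and requires the extra bookkeeping you flagged. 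The paper's global argument sidesteps the case analysis entirely.

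One small slip to fix: your parenthetical says ``deleting an edge can only merge minimum-spanning-forest connected components, not split them'' and that the components of $G'$ ``are unions of those of $G$''. Both statements are backwards --- removing edges can only refine (split) components, so components of $G$ are unions of components of $G'$. This remark plays no role in your actual argument, so the proof stands, but you should correct or delete the parenthetical.
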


\def\claimMSF{
We claim that $\MSF(G) \subseteq \MSF(G') \cup S$.
To see this observe that for every edge $e \in E' \setminus \MSF(G')$, it holds, by the cycle rule, that there exists a cycle in $G'$ such that $e$ is the heaviest edges in this cycle. Thus, these edges are not in $\MSF(G)$ either (because all the cycles in $G'$ exist in $G$ as well).

Since the number of connected components in $G$ is at most the number of connected components in $G'$ it holds that $|\MSF(G')| \leq |MSF(G)|$.
Thus, 
\begin{equation}
   |\MSF(G) \setminus \MSF(G')| \geq |\MSF(G') \setminus \MSF(G)|\label{eq.1}\;.
\end{equation}

Since $\MSF(G) \subseteq \MSF(G') \cup S$ it holds that $\MSF(G) \setminus MSF(G') \subseteq S$. 
Thus,
\begin{equation}
    |\MSF(G) \setminus \MSF(G')| \leq |S|\label{eq.2}\;.
\end{equation}
It follows from Equations~\ref{eq.1} and~\ref{eq.2} that $|\MSF(G')\setminus \MSF(G)| \leq |S|$. 
Thus, the claim follows from the bound on the maximum weight of an edge in $G$.
}
\ifnum\random=0
\begin{proof}
\claimMSF
\end{proof}
\fi

\noindent
We next describe our algorithm from a global point of view.

\subsubsection{The global algorithm}

Our global algorithm, which is listed in Algorithm~\ref{alg:MstGlob}, proceeds as follows.
In Step~\ref{global:step1}, the algorithm adds all the edges between {\em heavy} vertices to $E'$ where a heavy vertex is defined to be a vertex of degree greater than $\Delta \eqdef 6r^2 W/\eps$.

It then runs the partition oracle on the graph induced on the {\em light} vertices, namely vertices that are not heavy, and adds all the cut-edges of the partition, $\calP$, to $E'$ (Step~\ref{global:step2}).

Step~\ref{global:step3} consists of two parts. 
In Sub-step~\ref{step:MST51} the algorithm runs Algorithm~\ref{alg:sub-parts} on each part in $\calP$.
Algorithm~\ref{alg:sub-parts} partitions the parts into connected sub-parts by preforming a controlled variant of Borůvka's algorithm. The edges added in Sub-step~\ref{step:MST51} are the edges that span the sub-parts. 
In Sub-step~\ref{step:MST53}, for each sub-part, the algorithm adds a single edge to a single vertex in $H$ which is adjacent to the sub-part (assuming there is one).

This partitions the vertices of the graph into {\em clusters} and {\em isolated parts}, namely, parts in $\calP$ that are not adjacent to any vertex in $H$.
Each cluster contains a single heavy vertex, which we refer to as the {\em center} of the cluster and sub-parts that are connected by a single edge to the center.

In Step~\ref{global:step4} the algorithm adds edges to $E'$ between pairs of clusters that are adjacent to each other in $G' = (V, E\setminus E_{\calP})$. 
For every edge $\{u, v\}$ which is adjacent to two different clusters, $A$ and $B$, the algorithm runs Algorithm~\ref{alg:sample} which samples edges incident to $A$ and $B$ and returns the lightest one. The edge $\{u, v\}$ is added to $E'$ if it is lighter than the edge returned by Algorithm~\ref{alg:sample} (or if the algorithm returned null). 

We note that we do not have direct access to uniform samples of edges incident to a pair of specific clusters, $A$ and $B$. In fact, it could be the case that Algorithm~\ref{alg:sample} does not return any edge (this is likely when the degree of both centers is large compared to the number of edges which are incident to both $A$ and $B$). By adapting the analysis in~\cite{LRR20}, we show that nonetheless, the number of edges added in Step~\ref{global:step4} is sufficiently small with high probability.     

This concludes the description of the global algorithm. We next prove its correctness.
\ifnum\random=0
The local implementation of the algorithm appears in the appendix (Algorithm~\ref{alg:MstU}).
\fi
\begin{algorithm}
\caption{Global algorithm for approximated-MST in unbounded-degree minor-free graphs \label{alg:MstGlob}}
\textbf{Input:} parameters $\eps$ and $W$ and access to a minor-free graph $G=(V,E)$.\\
\textbf{Output:} $G=(V, E')$ which is an approximated-MST of $G$. 
\begin{enumerate}
    \item Let $H$ denote the set of all vertices of degree greater than $\Delta$ and let $L = V \setminus H$. 
    \item Add all the edges of $G[H]$ to $E'$.\label{global:step1}
    \item Run the partition oracle with parameter $\eps/(6W)$ on all vertices in $G[L]$. Let $\calP$ denote the resulting partition.
    \item Add all the edges in $E_\calP$ to $E'$.\label{global:step2}
    \item For each $S \in \calP$: \label{global:step3}
    \begin{enumerate}
        \item Run Algorithm~\ref{alg:sub-parts} and let $F=(S, A)$ denote the returned graph.\label{step:MST51}
        \item Add the edges in $A$ to $E'$. \label{step:MST52}
        \item For each connected component of $F$, $C$:\label{step:MST53}
        \begin{enumerate}
            \item Add to $E'$ the lightest edges which is adjacent to $C$ and $H$ (if such edge exists). 
        \end{enumerate}
    \end{enumerate}
    \item For each $v \in H$, define the {\em cluster} of $v$, denoted by $\calC(v)$, to be subset of vertices that contains $v$ and all the vertices in sub-parts $B$ such that there exists an edge in $E'$ that is incident to $v$ and a vertex in $B$. $v$ is referred to as the {\em center} of the cluster.
    \item For each edges $\{u, v\}\in E$ such that $u$ and $v$ belong to different clusters:\label{global:step4}
    \begin{enumerate}
        \item Run Algorithm~\ref{alg:sample} and add $\{u, v\}$ to $E'$ if it is lighter than the edge returned by the algorithm or if it returned null.
    \end{enumerate}
    \item Return $G'=(V, E')$.
\end{enumerate}
\end{algorithm}

\begin{claim}\label{claim:edges1}
With high constant probability, the number of edges added to $E'$ in steps~\ref{global:step1} and ~\ref{global:step2} of Algorithm~\ref{alg:MstGlob} is at most $\eps |V|/(3W)$. 
\end{claim}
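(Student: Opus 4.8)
The plan is to observe that Steps~\ref{global:step1} and~\ref{global:step2} add to $E'$ precisely the edge sets $E(G[H])$ and $E_\calP$ (the cut-edges of the partition the oracle answers according to), that these two sets are disjoint — the oracle is run on $G[L]$, so $E_\calP\subseteq E(G[L])$, which shares no edge with $E(G[H])$ since $H\cap L=\emptyset$ — and hence to bound the total $|E(G[H])|+|E_\calP|$. Here $H$, and therefore $E(G[H])$, is determined by $G$ alone; the only randomness in these two steps is that of the partition oracle, so the ``high constant probability'' in the statement is inherited directly from the success guarantee in Definition~\ref{def:po}.

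First I would bound $|H|$. Since $G$ is $R$-minor-free, Fact~\ref{fct:forest} gives $|E|\le r|V|$, hence $\sum_{v\in V}\deg_G(v)=2|E|\le 2r|V|$; as every vertex of $H$ has degree exceeding $\Delta=6r^2W/\eps$, this forces $|H|<2r|V|/\Delta$. Then, because $G[H]$ is an induced subgraph of the $R$-minor-free graph $G$ it is itself $R$-minor-free, so a second application of Fact~\ref{fct:forest} gives $|E(G[H])|\le r|H|<2r^2|V|/\Delta$; substituting the value of $\Delta$ bounds this by $O(\eps|V|/W)$ with the intended small constant — this is exactly what the choice of $\Delta$ is engineered for. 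For the second term, the oracle of Step~\ref{global:step2} is invoked on $G[L]$, which is again $R$-minor-free (an induced subgraph of $G$) and therefore lies in the class for which the oracle is an $\bigl(\eps/(6W),k\bigr)$-partition oracle; hence by Definition~\ref{def:po}, with high constant probability $|E_\calP|\le (\eps/(6W))\,|L|\le \eps|V|/(6W)$.

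Summing the two estimates — the first holding deterministically, the second with high constant probability — and plugging in the values of $\Delta$ and of the oracle's accuracy parameter gives the bound $\eps|V|/(3W)$ on the number of edges added in Steps~\ref{global:step1} and~\ref{global:step2}, as claimed. I do not expect a real obstacle here: the crux is the two-fold use of minor-freeness (globally, via $|E|\le r|V|$ and the degree threshold $\Delta$, to make $|H|$ small; and on $G[H]$ itself, to bound its edge count by $r|H|$), together with a black-box appeal to the partition-oracle guarantee. The only point requiring care is the bookkeeping of constants in $\Delta$ and in the accuracy parameter $\eps/(6W)$ so that the two contributions jointly stay within the budget $\eps|V|/(3W)$.
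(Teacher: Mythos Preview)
Your approach is essentially the same as the paper's: bound $|E_\calP|$ by the partition-oracle guarantee and bound $|E(G[H])|$ by first using Fact~\ref{fct:forest} together with a Markov/degree-counting argument to control $|H|$, then applying Fact~\ref{fct:forest} again to $G[H]$. One bookkeeping caveat (present in the paper's terse proof as well): with $\Delta=6r^2W/\eps$ your computation actually yields $|E(G[H])|\le 2r^2|V|/\Delta=\eps|V|/(3W)$, so the sum overshoots the stated budget by a constant factor; this is harmless since it is fixed by enlarging $\Delta$ by a factor of two.
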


\def\claimSim{
With high constant probability $|E_\calP| \leq \eps |V|/(6W)$. Since $G$ is minor-free it follows by Fact~\ref{fct:forest} and Markov's inequality the number of edges in $G[H]$ is at most $\eps|V|/(6W)$. The claim follows. 
}
\ifnum\random=0
\begin{proof}
\claimSim
\end{proof}
\fi

\begin{claim} \label{claim:edges3}
All edges added to $E'$ in step~\ref{global:step3} of Algorithm~\ref{alg:MstGlob} belong to $\MSF(\hat{G})$ where $\hat{G}=(V, E\setminus E_{\calP})$.
\end{claim}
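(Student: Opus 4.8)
The plan is to verify that every edge inserted into $E'$ during Step~\ref{global:step3} of Algorithm~\ref{alg:MstGlob} is the (unique, since all weights are distinct) lightest edge of $\hat G$ across some vertex cut, so that membership in $\MSF(\hat G)$ follows immediately from the cut rule (which holds componentwise for the MSF, hence is insensitive to $\hat G$ being disconnected). The structural fact I would establish up front is the following description of $\hat G = (V, E \setminus E_\calP)$: since $\calP$ partitions only the light vertices $L$ and we delete from $G$ only the $\calP$-cut edges lying inside $G[L]$, every edge of $\hat G$ incident to a light vertex of a part $S \in \calP$ has its other endpoint inside $S \cup H$. Consequently, for any $X \subseteq S$, the set of edges of $\hat G$ leaving $X$ is exactly $E^{\hat G}(X, S \setminus X) \cup E^{\hat G}(X, H)$.

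With this in hand I would treat the two kinds of edges added in Step~\ref{global:step3} separately. The first kind are the edges of the forest $F = (S,A)$ returned by Algorithm~\ref{alg:sub-parts} (Sub-step~\ref{step:MST51}). The controlled Bor\r uvka process maintains the invariant that when it merges a current sub-part $X \subseteq S$ across an edge $e$, the edge $e$ is the lightest edge from $X$ to $S \setminus X$ and is strictly lighter than every edge from $X$ to $H$ --- this is precisely the condition under which the process chooses to merge $X$ rather than freeze it. By the structural fact, $e$ is then the lightest edge of $\hat G$ leaving $X$, so $e \in \MSF(\hat G)$ by the cut rule applied to the cut $(X, V \setminus X)$. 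The second kind are the edges added in Sub-step~\ref{step:MST53}: for each sub-part $C$ of $F$, the lightest edge $e_C$ from $C$ to $H$ (when one exists). Since $C$ is a sub-part at termination, the \emph{freezing condition} it satisfies is exactly that the lightest edge from $C$ to $S \setminus C$ is heavier than $e_C$; together with the structural fact this makes $e_C$ the lightest edge of $\hat G$ leaving $C$, and the cut rule applied to $(C, V \setminus C)$ gives $e_C \in \MSF(\hat G)$. (A part $S$ with no neighbor in $H$ is an entire connected component of $\hat G$; Algorithm~\ref{alg:sub-parts} runs ordinary Bor\r uvka to completion on it, so the tree it outputs is $\MSF(G[S]) \subseteq \MSF(\hat G)$, and no Sub-step~\ref{step:MST53} edge is produced for $S$.)

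The main obstacle is making the two freezing-condition invariants precise, because Bor\r uvka merges many sub-parts simultaneously within a round: after a round a sub-part is a union of several minimum-outgoing-edge chains, and a sub-part that is eligible to freeze can still be merged into by another sub-part that selected it. I would therefore isolate this in a lemma about Algorithm~\ref{alg:sub-parts} proved before Claim~\ref{claim:edges3}: (i) every edge it adds to $A$ is, at the moment of addition, the minimum-weight edge of $\hat G$ leaving the sub-part that selected it; and (ii) every sub-part present at termination is either an entire connected component of $\hat G$, or has as its minimum-weight edge of $\hat G$ leaving it an edge to $H$ --- and that edge is exactly the one picked for it in Sub-step~\ref{step:MST53}. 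Once this lemma is in place the cut-rule argument above goes through verbatim; the remaining points (distinct weights, the componentwise validity of the cut rule for $\MSF$) are routine.
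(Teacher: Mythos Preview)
Your proposal is correct and follows essentially the same approach as the paper: apply the cut rule to the cut $(B,V\setminus B)$ for each active sub-part $B$ that selects an edge $e_B$ in Algorithm~\ref{alg:sub-parts}, and to the cut $(C,V\setminus C)$ for each terminal (hence inactive) sub-part $C$ when Sub-step~\ref{step:MST53} picks the lightest edge from $C$ to $H$. Your ``structural fact'' that $E^{\hat G}(X,V\setminus X)=E^G(X,S\setminus X)\cup E^G(X,H)$ for $X\subseteq S$ is exactly what the paper uses (implicitly) to identify the lightest $\hat G$-edge leaving $X$.

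The only difference is that you propose to isolate a separate lemma about the Bor\r uvka invariants, worrying that an inactive sub-part may be absorbed by an active neighbor during a round. This caution is harmless but unnecessary: the paper's shorter argument already handles it, since (i) every edge added to $A$ is selected by an \emph{active} sub-part at the moment of selection, so the cut rule applies to that sub-part regardless of what it later merges with, and (ii) the while-loop terminates only when no sub-part is active, so every terminal component is automatically inactive and the freezing condition holds for it. No additional lemma is needed.
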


\def\claimSimB{
Let $S\in \calP$ and let $F=(S, A)$ denote the graph returned by Algorithm~\ref{alg:sub-parts}.

We first prove that all the edges in $A$ belong to $\MSF(\hat{G})$.
Since each sub-part $B$ of $S$ is active (see Step~\ref{step:active} of Algorithm~\ref{alg:sub-parts}) as long as the lightest edge in $E^G(B, S\setminus B)$ is lighter than the lightest edge in $E^G(B, H)$ it follows that $e_B$ (see Step~\ref{step:subedge} of Algorithm~\ref{alg:sub-parts}) is the lightest edge in $E^{\hat{G}}(B, V\setminus B)$. Thus, all the edges of $A$ belong to $\MSF(\hat{G})$ by the cut rule (see Subsection~\ref{sub:preMST}).  

We next prove that for each connected component of $F$, $C$, the lightest edge which is adjacent to $C$ and $H$ (if such edge exists) is in $\MSF(\hat{G})$.
We first note that we only need to consider $S$ such that $E^G(S, H) \neq \emptyset$. 
In this case each connected component of $F$, $C$, is adjacent to at least one vertex in $H$. Since $C$ is not active it follows that lightest edge which is adjacent to $C$ and $H$ is the lightest edge in the cut of $C$ in $\hat{G}$. Thus the claim follows from the cut rule. 
This concludes the proof of the claim.
}

\ifnum\random=0
\begin{proof}
\claimSimB
\end{proof}

In the proof of the following claim we closely follow the analysis in~\cite{LRR20} (see proof in the appendix).
\fi
\begin{claim}
\label{clm:F}
Let $U$ denote the set of edges in $E'$ that are incident to two different clusters (namely, each endpoint belongs to a different cluster). With probability $1-1/\Omega(|V|)$, $|U| \leq \eps|V|/(3W)$\;.
\end{claim}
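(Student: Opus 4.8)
The plan is to follow the strategy of~\cite{LRR20} by passing to the graph $G^\star$ obtained from $\hat G=(V,E\setminus E_\calP)$ by contracting every cluster $\calC(v)$ (and every isolated part) into a single vertex. Since contractions preserve minor-freeness, $G^\star$ is still $R$-minor-free, so by Fact~\ref{fct:forest} it has at most $r\cdot|V^\star|$ edges, where $V^\star$ is its vertex set; in particular the number of \emph{pairs} of clusters that are adjacent in $G^\star$ is at most $r|V^\star|\le r|V|$. The edges counted by $U$ are edges between adjacent clusters, so we must bound, for each adjacent pair $(A,B)$, how many edges of $E^G(A,B)$ get added in Step~\ref{global:step4}. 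An edge $\{u,v\}\in E^G(A,B)$ is added only if Algorithm~\ref{alg:sample} (run on the pair $A,B$) either returns null or returns an edge no lighter than $\{u,v\}$; in particular, conditioned on Algorithm~\ref{alg:sample} returning \emph{some} edge, at most the edges of $E^G(A,B)$ lighter than that returned edge get added, and a standard argument shows the expected number of such edges is small when the sampling succeeds with decent probability.

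First I would set up the sampling analysis for a single ordered pair $(A,B)$ with centers $a,b$ of degrees $\deg(a),\deg(b)\le$ whatever the heavy threshold $\Delta$ forces (note centers are heavy, so their degrees can be large — this is exactly the difficulty). Algorithm~\ref{alg:sample} draws roughly $\poly(W/\eps)$ random neighbors of $a$ and of $b$ and, using the leader/part structure, recovers some subset of $E^G(A,B)$; each particular edge of $E^G(A,B)$ is recovered with probability at least $\Omega\big(1/(\deg(a)+\deg(b))\big)$ per sample (the distribution is non-uniform but each edge has a comparable lower bound, as in~\cite{LRR20}). Hence if $|E^G(A,B)|$ is large compared to $(\deg(a)+\deg(b))/\poly(W/\eps)$, then with high probability Algorithm~\ref{alg:sample} returns an edge, and conditioned on that, the expected number of edges of $E^G(A,B)$ added to $E'$ is $O(|E^G(A,B)|/\text{(sample size)})=O(\eps|E^G(A,B)|/(\poly\cdot W))$ — summing over all adjacent pairs and using $\sum_{(A,B)}|E^G(A,B)|\le|E|\le r|V|$ gives a contribution of $O(\eps|V|/W)$ with room to spare. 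In the complementary case, $|E^G(A,B)|$ is \emph{small} — at most $(\deg(a)+\deg(b))\cdot\poly(\eps/W)$ — and here we simply allow all of $E^G(A,B)$ into $E'$; the total over all such pairs is bounded by $\sum_{v\in H}\deg(v)\cdot\poly(\eps/W)$, and since $\sum_{v\in H}\deg(v)\le 2|E|\le 2r|V|$, this too is $O(\eps|V|/W)$, as long as the polynomial $\poly(\eps/W)$ and the constants are chosen so the two bounds together sum to at most $\eps|V|/(3W)$.

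The two bounds above are statements about expectations (for the first case) and deterministic bounds (for the second). To upgrade to the claimed $1-1/\Omega(|V|)$ probability I would argue as follows: the deterministic part needs no probability; for the random part, I would either invoke that the per-pair randomness, though shared via a $\poly(h/\eps)\log n$-bit seed, still yields sufficient independence to apply a concentration bound (as done in~\cite{LRR20}), or bound the variance and apply Chebyshev with a union bound over the $\le r|V|$ pairs — each pair contributes a bounded number of edges, and $|V|\cdot$(tail probability) can be driven below any constant, and in fact below $1/|V|$, by taking the sample size a large enough $\poly(W/\eps)$. Combining the $O(\eps|V|/W)$ bounds from the two cases and choosing the constant in $\Delta=6r^2W/\eps$ appropriately yields $|U|\le\eps|V|/(3W)$ with probability $1-1/\Omega(|V|)$.

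The main obstacle is the first case: handling the non-uniformity of the edge-sampling distribution and the fact that the cluster centers are heavy (so we genuinely cannot enumerate their incident edges), and then converting the per-pair expectation bound into a high-probability statement despite the shared random seed. This is precisely the technically delicate part adapted from~\cite{LRR20}, and I would lean on their lemma that each edge in $E^G(A,B)$ is sampled with probability $\Omega(1/(\deg(a)+\deg(b)))$ and that the contracted graph being $R$-minor-free (hence bounded arboricity) controls the global sum $\sum_{(A,B)}$; the rest is bookkeeping with the constants in $\Delta$ and the sample size.
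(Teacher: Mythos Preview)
Your high-level strategy matches the paper's (and~\cite{LRR20}'s): contract clusters, use minor-freeness/arboricity of the contracted graph, split into two cases, and finish with Chebyshev under pairwise independence. However, your ``small-cut'' accounting has a real gap. You bound each small pair by $(\deg(a)+\deg(b))\cdot\poly(\eps/W)$ and then assert the sum over small pairs is at most $\sum_{v\in H}\deg(v)\cdot\poly(\eps/W)$. That step fails: a single center $a$ can be adjacent to arbitrarily many other clusters, so $\deg(a)$ gets counted once per neighboring cluster, and $\sum_{(A,B)}(\deg(a)+\deg(b))=\sum_A \deg(a)\cdot\deg_{G^\star}(A)$ is not controlled by $\sum_a\deg(a)$ alone. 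Knowing that $G^\star$ has at most $r|V^\star|$ edges does not help here.

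The paper repairs this by using the arboricity of $G^\star$ to \emph{orient} the mega-edges so that each cluster has at most $r$ outgoing mega-edges, and then analyzing only the sampling from the source side $u$ of each oriented pair. Consequently the threshold is one-sided --- it involves only $|N(u)|$, not $|N(u)|+|N(w)|$ --- so the small-case charge to cluster $B$ (with center $u$) is at most $r\cdot c\cdot|N(u)|$, and summing over clusters gives $r\cdot c\cdot\sum_{u\in H}|N(u)|\le 2rc\,|E|$, which is what you need. Relatedly, the paper's case split is not on $|E(B,B')|$ versus $\deg$ but on the fraction $|N^b(u,B')|/|N(u)|$, where $N^b(u,B')$ is the set of neighbors of $u$ whose sampling would reveal one of the bottom-$(\eps/(9rW))$ fraction of edges in $E(B,B')$; this sidesteps the non-uniformity you flagged without needing a per-edge $\Omega(1/\deg)$ lower bound. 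Finally, the concentration is a single global Chebyshev over the indicators $J_y$ (pairwise independent because the neighbor samples come from a $t$-wise independent source), not a union bound over pairs.
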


\begin{claim}
Let $G = (V, E)$ be a connected minor-free graph, then with high constant probability the graph returned by Algorithm~\ref{alg:MstGlob}, $G' = (V, E')$, is connected and $\sum_{e\in E'} w(e) \leq (1+\eps){\rm OPT}$, where ${\rm OPT}$ is the weight of a minimum weight spanning tree of $G$.
\end{claim}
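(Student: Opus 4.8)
The plan is to verify the two assertions---connectivity of $G'=(V,E')$ and $\w(E')\le(1+\eps){\rm OPT}$---by assembling Claims~\ref{claim:almost},~\ref{claim:edges1},~\ref{claim:edges3} and~\ref{clm:F}. Observe first that Step~\ref{global:step3} partitions $V$ into the clusters $\calC(v)$ ($v\in H$) and the isolated parts (parts of $\calP$ with no neighbour in $H$); call this partition $\Pi$. To see that $G'$ is connected I would show (a) each part of $\Pi$ induces a connected subgraph of $G'$, and (b) every pair of $\Pi$-adjacent parts is joined by an edge of $E'$; since $G$ is connected the quotient $G/\Pi$ is connected, so (a) and (b) give connectivity of $G'$. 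For (a): an isolated part $S$ has no edge to $H$, so the stopping rule of Algorithm~\ref{alg:sub-parts} never halts the controlled Bor\r{u}vka procedure prematurely and $F=(S,A)$ is a spanning tree of $S$, with $A\subseteq E'$ by Sub-step~\ref{step:MST52}; a cluster $\calC(v)$ is, inside $G'$, a ``star'' whose center $v$ is joined to each of its sub-parts by the single edge of Sub-step~\ref{step:MST53}, and each sub-part is internally connected by its $A$-edges---here one also uses that when $E^G(S,H)\neq\emptyset$ the stopping rule guarantees every sub-part of $S$ does have an edge to $H$, hence is placed in some cluster, so $\Pi$ really is a partition of $V$. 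For (b): if the two $\Pi$-parts are not both clusters, then any $G$-edge between them is a light--light edge across two parts of $\calP$, hence lies in $E_\calP$ and is added in Step~\ref{global:step2}; if both are clusters, heavy--heavy edges are added in Step~\ref{global:step1}, and otherwise the minimum-weight edge $e^\ast$ of the cut between them is retained in Step~\ref{global:step4}, because Algorithm~\ref{alg:sample} can only return a genuine edge of that cut (or null), which is never lighter than $e^\ast$.

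For the weight bound I would split $E'$ according to the step in which each edge was inserted. The edges inserted in Step~\ref{global:step3} are contained in $\MSF(\hat G)$ with $\hat G=(V,E\setminus E_\calP)$ by Claim~\ref{claim:edges3}, so their total weight is at most $\w(\MSF(\hat G))$. Conditioning on the event of Claim~\ref{claim:edges1} (which in particular yields $|E_\calP|\le\eps|V|/(6W)$), Claim~\ref{claim:almost} with $W_G\le W$ gives $\w(\MSF(\hat G))\le\w(\MSF(G))+|E_\calP|\,W_G\le{\rm OPT}+\eps|V|/6$, using ${\rm OPT}=\w(\MST(G))$ since $G$ is connected. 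By Claim~\ref{claim:edges1} the edges of Steps~\ref{global:step1}--\ref{global:step2} number at most $\eps|V|/(3W)$, contributing weight at most $\eps|V|/3$; by Claim~\ref{clm:F} the cross-cluster edges of Step~\ref{global:step4} number at most $\eps|V|/(3W)$ with probability $1-1/\Omega(|V|)$, contributing at most $\eps|V|/3$. Summing, on the intersection of the events of Claims~\ref{claim:edges1} and~\ref{clm:F} we get $\w(E')\le{\rm OPT}+\tfrac56\eps|V|$. Finally, every edge weight is at least $1$ and every spanning tree of $G$ has $|V|-1$ edges, so ${\rm OPT}\ge|V|-1$; hence for $|V|$ above a small absolute constant, $\tfrac56\eps|V|\le\eps\,{\rm OPT}$ after rescaling $\eps$ by a constant factor at the outset (which changes the complexity only by a constant, and the finitely many small graphs are handled by returning an exact MST), giving $\w(E')\le(1+\eps){\rm OPT}$. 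A union bound over the $O(1)$ failure events yields the claimed high constant probability.

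The weight accounting is routine given the earlier claims; the delicate point is connectivity, specifically part (b) for pairs of clusters, since Step~\ref{global:step4} intentionally keeps only those cut-edges that beat a random sample and may discard many of them. The crucial fact that rescues connectivity is that Algorithm~\ref{alg:sample} returns only an actual edge of the relevant cut $E^G(A,B)$ (or null), so the lightest such edge is never removed; together with the observation that all cut-edges of $\calP$, all heavy--heavy edges, and all within-cluster connector edges are retained unconditionally, this makes every pair of $\Pi$-adjacent parts connected in $G'$. A secondary subtlety is confirming that Step~\ref{global:step3} assigns every vertex to a cluster or to an isolated part, which I would trace back to the stopping condition of the controlled Bor\r{u}vka procedure in Algorithm~\ref{alg:sub-parts}.
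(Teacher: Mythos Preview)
Your proof is correct and follows essentially the same approach as the paper: the same partition $\Pi$ into clusters and isolated parts, the same connectivity argument (each $\Pi$-part is internally connected in $G'$, and adjacent $\Pi$-parts are linked via $E_\calP$, heavy--heavy edges, or the lightest cross-cluster edge surviving Step~\ref{global:step4}), and the weight bound assembled from Claims~\ref{claim:almost}--\ref{clm:F}. Your write-up is considerably more explicit than the paper's---in particular you spell out the $\w(E')\le{\rm OPT}+\tfrac56\eps|V|$ accounting and the conversion to $(1+\eps){\rm OPT}$ via ${\rm OPT}\ge|V|-1$, which the paper leaves implicit---but the underlying argument is the same.
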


\begin{proof}
We begin by proving $G'$ is connected. To see this observe that each vertex either belongs to a cluster or to a subset $S \in \calP$ such that $E^G(S, H) = \emptyset$.

By construction, the subgraph induced on each cluster in $G'$ is connected.
For any two clusters which are connected by an edge the lightest edge that connects the clusters belongs to $E'$.
This follows from Step~\ref{global:step4} in Algorithm~\ref{alg:MstGlob}.

If $E^G(S,H) = \emptyset$ then $G'[S]$ is connected by Algorithm~\ref{alg:MstGlob}, as all sub-parts of $S$ remain active throughout the entire execution of the algorithm.
Moreover, all the edges in $E^G(S, V\setminus S)$ are in $E'$ as well.
Thus $G'$ is connected.

The claim regrading the weight of the edges of $G'$ follows from Claims~\ref{claim:almost}-~\ref{clm:F}.
\end{proof}

\begin{algorithm}

\caption{Return sampled lightest edge \label{alg:sample}}
\textbf{Input:} $a\in H$ and  $b\in H$.\\
\textbf{Output:} The sampled lightest edge between $\calC(a)$ and $\calC(b)$ 
\begin{enumerate}
    \item Initially $A = \emptyset$ and let $x$ be an upper bound on the size of the parts returned by the partition oracle when executed with parameter $\eps/(6W)$.
    \item Sample a set, $S$, of $\tilde{\Theta}(W^2r^3x \Delta/\eps^2)$ edges incident to $a$.\label{sample:step1}
    \item For each $\{a, u\} \in S$ do: \label{sample:step2}
    \BE
    \item If $u \in H$ then add it to $A$ if and only $u = b$.
    \item Otherwise, find the part of $u$ and the sub-parts of this part.
    \item For each sub-part find its center.
    \item Add to $A$ all the edges between the sub-part of $u$ and sub-parts that belong to $\calC(b)$.
    \EE
    \item Repeat Steps~\ref{sample:step1}-\ref{sample:step2} where $a$ and $b$ switch roles. 
    \item Return the lightest edge in $A$ (if $A = \emptyset$ then return null).
\end{enumerate}

\end{algorithm}

\begin{algorithm}[H]
\caption{Partition into sub-parts \label{alg:sub-parts}}
\textbf{Input:} Access to the input graph $G=(V,E)$ and a subset $S \subseteq L$ such that $G[S]$ is connected.\\
\textbf{Output:} A graph $F = (S, A)$ such that each connected component of $F$ is a sub-part of $S$ 
\BE
\item Initially every vertex in $S$ is a in its own sub-part (a singleton) and $A = \emptyset$.
\item We say a sub-part $B$ is {\em active} if the lightest in $E^G(B, S\setminus B)$ is lighter than the lightest edge in $E^G(B, H)$ or if $E^G(B, H) = \emptyset$.\label{step:active}
\item While there are still active sub-parts do:
\BE
\item Each active sub-part $B$ selects the lightest edge in $E^G(B, S\setminus B)$, denoted by $e_B$.\label{step:subedge}
\item For each active sub-part, $B$, add $e_B$ to $A$ 
\item Update the new sub-parts to be the connected components of the graph $F = (S, A)$ (each connected component is a sub-part).
\EE
\item Return $F$.
\EE
\end{algorithm}

\section{Algorithms for minor-free graphs with bounded degrees}

\subsection{Covering partition oracle}
In this section we prove the following theorem.
\begin{theorem}\label{thm:cpo}
Algorithm~\ref{alg:cpo} is an 
$(\eps, \poly(\eps^{-1}))$-covering-partition oracle for minor-free bounded degree graphs with query complexity $\poly(\eps^{-1})$. Specifically, the size of the sets returned by the oracle is $O(\eps^{-640}\log^2 (1/\eps))$.
\end{theorem}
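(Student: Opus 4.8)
The plan is to build on the structural theorem of Kumar--Seshadhri--Stolman~\cite{KSS19} described in the introduction: for every minor-free bounded-degree graph $G$ there is an $(\eps,k)$-partition $\calP$ with $k = \poly(1/\eps)$ such that every part $P \in \calP$ has a ``source'' vertex $s=s(P)$ with the property that a batch of $\poly(1/\eps)$ lazy random walks of appropriate length(s) started at $s$ visits all of $P$ with high probability. First I would fix this partition $\calP$ as the partition that the oracle is implicitly answering according to; note that it is determined by $G$ alone (no randomness), so the consistency requirement across queries is immediate. Then Algorithm~\ref{alg:cpo} on query $v$ performs a family of lazy random walks of geometrically increasing lengths from $v$, and from each endpoint reached performs a second family of lazy random walks; the returned set $S$ is (a connected superset of) the union of all vertices visited. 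The two-phase walk is exactly what lets us ``find'' the source $s(P_v)$ of the part $P_v \ni v$: with high probability, among the endpoints of the first phase of walks from $v$ we hit $s(P_v)$ (because $v\in P_v$ and $P_v$ is itself reached from $s(P_v)$ by such walks, and the walks are reversible/lazy so the connectivity in the relevant length scale works both ways up to the $\poly(1/\eps)$ factors), and then the second phase of walks from $s(P_v)$ covers all of $P_v$ by the KSS guarantee.

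The key steps, in order, would be: (1) State precisely the KSS structural theorem (quoting the parameters: part size bound $k$, number of walks, walk length, failure probability), and record that $|E_\calP|\le \eps|V|$ for this $\calP$. (2) Argue the oracle's output is always a connected set of size $\poly(1/\eps)$ — this is the first bullet of the covering-partition-oracle definition and is essentially by construction, since each lazy random walk of bounded length visits at most $\poly(1/\eps)$ vertices, there are $\poly(1/\eps)$ of them in each phase, and one can close up the visited set into a connected subgraph by also including the walk trajectories (which are themselves connected paths). This is where the stated bound $O(\eps^{-640}\log^2(1/\eps))$ comes from: it is the product of the per-walk vertex bound, the number of walks in phase one, the number of walks in phase two, and a $\log(1/\eps)$-type factor for the range of lengths tried, all inherited from the KSS parameters. (3) Prove the covering property: with high probability over the oracle's coins, $g_\calP(v)=P_v \subseteq S$. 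This is the crux and I would do it in two sub-steps — (3a) with high probability the first-phase walks from $v$ include $s(P_v)$ among their visited vertices (using that $s(P_v)$ can reach $v$, hence by laziness/reversibility $v$ reaches $s(P_v)$ with non-negligible probability at the right length scale, amplified by taking $\poly(1/\eps)$ walks); (3b) conditioned on having visited $s(P_v)$, the second-phase walks launched from $s(P_v)$ cover all of $P_v$ with high probability, directly by the KSS guarantee. A union bound over the two sub-steps gives the claim for a single query. (4) Finally, observe that since $\calP$ is a fixed $(\eps,k)$-partition of $G$, the bound $|E_\calP|\le\eps|V|$ holds deterministically, so the second bullet of the definition holds with probability $1$ (the only randomness is in whether $S\supseteq P_v$, handled in step 3), and adjust the failure probability of step 3 to be a small constant (or $1/\poly$) by taking the walk-count constants large enough.

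The main obstacle I expect is step (3a): making rigorous the claim that, because the source $s(P_v)$ reaches every vertex of its part (including $v$) via $\poly(1/\eps)$ lazy walks of a certain length, therefore $v$ reaches $s(P_v)$ via such walks with probability at least $1/\poly(1/\eps)$. Lazy random walks on an undirected bounded-degree graph are reversible, so $\Pr[\text{walk from }v\text{ hits }s]$ and $\Pr[\text{walk from }s\text{ hits }v]$ are comparable up to the ratio of stationary probabilities, which is $O(d)$ here — but one has to be careful about (i) whether the KSS guarantee is about \emph{hitting} $v$ or about the \emph{endpoint} being $v$, (ii) the fact that the walks may leave the part $P_v$, and (iii) matching the walk \emph{lengths} between the two directions, which is why the algorithm tries a geometric range of lengths rather than a single one. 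I would handle (iii) by noting that the ``correct'' length is some value in $[1,\poly(1/\eps)]$ unknown to the oracle, so trying all powers of two in that range costs only a $\log(1/\eps)$ factor (accounting for the $\log^2$ in the stated size bound, with the square coming from doing this in both phases) and guarantees that one of the attempted lengths is within a factor $2$ of the right one, which is enough since lazy walks are insensitive to constant-factor changes in length. Everything else — connectivity and size of $S$, the deterministic cut-edge bound, the final union bound — is routine bookkeeping once the KSS theorem is quoted with explicit parameters.
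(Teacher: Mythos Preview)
Your high-level strategy for steps (2) and (3) is essentially what the paper does (and the reversibility argument you sketch in (3a) is indeed the underlying reason why the two-phase walk works; the paper invokes this without spelling it out). However, your step (4) contains a genuine gap that stems from misreading the definition of a covering partition oracle.

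In the definition, the inclusion $g_{\calP}(v)\subseteq S$ is a \emph{hard} requirement on every query: once the oracle's random seed is fixed, the returned set must contain the part of $v$, always. The only probabilistic guarantee is on $|E_{\calP}|$. Your plan takes $\calP$ to be the deterministic KSS partition and then shows $P_v\subseteq S$ only with high probability, which does not satisfy the definition. (Concretely: for any fixing of the oracle's coins there will be vertices $v$ for which the two-phase walk misses some vertex of $P_v$, and for those $v$ the oracle would be returning a set that does \emph{not} contain $g_\calP(v)$.)

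The paper repairs this with one extra idea you are missing: it lets the partition depend on the randomness. Call $v$ \emph{failed} if, under the fixed random seed, the algorithm's output $S$ does not contain the KSS part $P^v$. Define the refined partition $\calP'$ by shattering every KSS part that contains at least one failed vertex into singletons, and keeping the remaining parts intact. Then $g_{\calP'}(v)\subseteq S$ holds \emph{deterministically} for every $v$ (trivially when $v$'s part is a singleton; by the definition of ``not failed'' otherwise). What becomes probabilistic is the cut-edge bound: one must argue that with high probability few parts get shattered, so $|E_{\calP'}|\le |E_{\calP}|+ (\text{small})\le \eps|V|$. This last step uses that each vertex fails with probability at most $\ell^{-c_1}$ (your step (3)), together with a Chernoff-type bound over the independent per-vertex failure events. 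Your proof becomes correct once you insert this refinement-to-singletons step and move the ``high probability'' from the covering property to the cut-edge count.

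A minor side remark: Algorithm~\ref{alg:cpo} actually performs walks of \emph{every} length $t<10\ell^8$, not geometrically spaced lengths, so your explanation of the $\log^2(1/\eps)$ factor via powers of two is not how the stated bound arises; the $\log$ factors come instead from the number $x=\Theta(\ell^8\log\ell)$ of walks at each length, squared across the two phases.
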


We begin with a couple of definitions and lemmas from~\cite{KSS21} that we build on.

\begin{definition}[\cite{KSS21}] \label{def:clip} Given $\bx \in (\mathbb{R}^+)^{|V|}$ and parameter $\xi \in [0,1)$, the $\xi$-clipped
vector $\clip{\bx}{\xi}$ is the lexicographically least vector $\by$ optimizing the program:
$\min \|\by\|_2$, subject to $\|\bx - \by\|_1 \leq \xi$ and $\forall v \in V, \by(v) \leq \bx(v)$.
\end{definition}

\begin{lemma}[\cite{KSS21}]\label{lem:clip:new}
There is an absolute constant $\alpha$ such that the following holds.
Let $H$ be a graph on $r$ vertices. Suppose $G$ is a $H$-minor-free
graph. Then for any $h \geq \alpha r^3$, there exists 
at least $(1 - 1/h )n$ vertices such that
$\| \clip{\wvec{v}{\len}}{3/8}\|_2^2 \geq h^{-7}$.
	\end{lemma}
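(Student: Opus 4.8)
The plan is a proof by contradiction: if the conclusion fails at ``too many'' vertices, one extracts an $H$-minor from them (this is the structural lemma of~\cite{KSS21}, and I outline the route I would take). Write $\bx_v \eqdef \wvec{v}{\len}$ for the length-$\len$ lazy-walk endpoint distribution from $v$, and call $v$ \emph{bad} if $\|\clip{\bx_v}{3/8}\|_2^2 < h^{-7}$. It suffices to show that at most $n/h$ vertices are bad, since then at least $(1-1/h)n$ vertices are good and satisfy the stated inequality.

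First I would unpack what ``bad'' means, using the explicit form of the clip. By a standard exchange argument the minimiser $\by \eqdef \clip{\bx_v}{3/8}$ of Definition~\ref{def:clip} is a ``water-filling from the top'': there is a threshold $\tau>0$ with $\by(u)=\min\{\bx_v(u),\tau\}$ for every $u$ and total removed mass $\sum_u(\bx_v(u)-\by(u))=3/8$, so $\|\by\|_1 = 5/8$. Some coordinate is actually capped, so $\|\by\|_2^2 \ge \tau^2$, whence a bad $v$ has $\tau < h^{-7/2}$, i.e.\ no coordinate of $\by$ exceeds $h^{-7/2}$; and Cauchy--Schwarz gives $|\mathrm{supp}(\bx_v)| = |\mathrm{supp}(\by)| \ge \|\by\|_1^2/\|\by\|_2^2 > \tfrac{25}{64}\,h^{7}$. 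So a bad $v$ is one from which the length-$\len$ walk \emph{genuinely diffuses}: a $5/8$-fraction of its mass lands, without concentrating on any single vertex, on a set $R_v$ of $\Omega(h^7)$ vertices; in particular the union of the length-$\len$ walks from $v$ that terminate in $R_v$ is a connected subgraph of $G$ spanning $\{v\}\cup R_v$.

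The heart of the argument --- and the step I expect to be the main obstacle --- is to show that $\ge n/h$ bad vertices force $G$ to contain $H$ (equivalently $K_r$, which contains $H$) as a minor, contradicting minor-freeness once $h \ge \alpha r^3$ for a large absolute constant $\alpha$; here I would follow~\cite{KSS21}. The template is to build a $K_r$-minor model whose branch sets are connected subgraphs carved out of the diffusing walks of suitably chosen bad seeds: connectivity of each branch set is free, since all of its walks emanate from one seed, so the work is to choose the seeds and prune their walks so that the $r$ branch sets become pairwise disjoint while remaining pairwise adjacent. This is exactly where the two quantitative features of badness are used --- the per-coordinate flatness $h^{-7/2}$ to control how heavily different seeds' diffusions can overlap, and the size $\Omega(h^7)$ of each $R_v$ together with the edge bound $|E|\le r|V|$ for $H$-minor-free graphs (Fact~\ref{fct:forest}) to force an edge between the regions of two seeds after pruning --- and it is where the specific constants ($3/8$, the exponent $7$, and $\alpha r^3$) come from; matching them is the quantitative bookkeeping I would reproduce rather than re-derive. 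Granting this, the lemma follows: at most $n/h$ vertices are bad, so at least $(1-1/h)n$ vertices $v$ satisfy $\|\clip{\wvec{v}{\len}}{3/8}\|_2^2 \ge h^{-7}$.
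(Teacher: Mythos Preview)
The paper does not prove this lemma: it is quoted from \cite{KSS21} and used as a black box (together with Theorem~\ref{thm:partition}) to derive Corollary~\ref{col:1}. There is therefore no proof in the present paper to compare your proposal against.

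On the proposal's own merits: your unpacking of ``bad'' via the water-filling description of the clip is correct and gives the stated consequences (threshold $\tau<h^{-7/2}$ and support size $\Omega(h^{7})$). But the heart of the argument---showing that $n/h$ bad vertices force a $K_r$-minor---is not actually carried out; you explicitly say you would ``follow~\cite{KSS21}'' there. Since the lemma you are trying to prove \emph{is} the lemma of~\cite{KSS21}, this is circular rather than an independent proof. The specific mechanism you sketch (carving $r$ disjoint, pairwise-adjacent branch sets out of the walk trajectories of selected bad seeds) also has a real gap at the adjacency step: large, flat supports for two seeds' length-$\ell$ walks do not by themselves guarantee an edge between disjoint prunings of those supports, and the sparsity bound $|E|\le r|V|$ from Fact~\ref{fct:forest} cuts the wrong way---it makes cross-edges \emph{scarcer}, not easier to find. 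Without a concrete argument for how flatness plus support size plus minor-freeness produce the required adjacencies (and with the right constants), what you have is a plausible outline, not a proof.
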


Given two parameters $\eps \in [0, 1/2]$, and a graph $R$ on $r \geq 3$ vertices. The length of the random walk is $\ell = \alpha r^3 +\lceil\eps^{-20}\rceil $ where $\alpha$ is some absolute constant.
\begin{theorem}[\cite{KSS21}]\label{thm:partition}
Suppose there are at least $(1-1/\len^{1/5})n$ vertices $s$ such that
$\|\clip{\wvec{s}{\ell}}{1/4}\|^2_2 > \len^{-c}$. Then, there 
is a partition $\{P_1, P_2, \ldots, P_b\}$ of the vertices such that:
\begin{enumerate}
    \item For each $P_i$, there exists $s \in V$ such that: $\forall v \in P_i$, $\sum_{t < 10\len^{c+1}} \prw{s}{v}{t} \geq 1/8\len^{c+1}$.
    \item The total number of edges crossing the partition is at most $8dn\sqrt{c\len^{-1/5}\log \len}$. 
\end{enumerate}
\end{theorem}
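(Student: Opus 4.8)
The plan is to follow the strategy behind the partition oracle of~\cite{KSS21}: turn the spectral-type hypothesis on the length-$\len$ lazy walk into a \emph{structural} statement — almost every vertex lies in a small region of low conductance carved out by the walk from some source — and then greedily partition $V$ into such regions while controlling both the number of crossing edges and the cumulative visit-frequency guarantee of Property~1. The natural first step is to unpack the hypothesis. Fix a source $s$ with $\|\clip{\wvec{s}{\len}}{1/4}\|_2^2 > \len^{-c}$, write $\mathbf{x} = \wvec{s}{\len}$ (a probability vector), and let $\theta$ be the clipping threshold, so that ${\rm cl}(\mathbf{x},1/4)(v) = \min\{\mathbf{x}(v),\theta\}$ and $\sum_{v:\mathbf{x}(v)>\theta}(\mathbf{x}(v)-\theta)=1/4$. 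Since $\|{\rm cl}(\mathbf{x},1/4)\|_2^2 \le \theta\sum_v\min\{\mathbf{x}(v),\theta\} = \tfrac34\theta$, we get $\theta > \tfrac43\len^{-c}$. Hence the heavy set $T_s \eqdef \{v:\mathbf{x}(v)\ge\theta\}$ has $|T_s| \le 1/\theta < \tfrac34\len^{c}$, every $v\in T_s$ has $\prw{s}{v}{\len}\ge\len^{-c}$, and $\sum_{v\in T_s}\mathbf{x}(v) = \tfrac14+\theta|T_s|>\tfrac14$. So after $\len$ steps from $s$ at least a quarter of the walk mass sits on a set of at most $\tfrac34\len^{c}$ vertices, each individually at probability level $\ge\len^{-c}$.

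The core of the argument is then a trapping statement: a length-$\len$ lazy walk cannot keep a constant fraction of its mass on $O(\len^{c})$ vertices unless that region is poorly connected to the rest. I would make this quantitative with a Cheeger / evolving-set style argument — if every set of size $\len^{O(c)}$ around $T_s$ had conductance $\ge 1/\poly(\len)$, then $\|\wvec{s}{t}\|_2$ would drop below $\len^{-c}$ well before time $\len$, using crucially that $\len = \alpha r^3 + \lceil\eps^{-20}\rceil$ is large compared with the relevant polynomial in $1/\theta$. So each good source $s$ yields a set $S_s$ containing a constant fraction of $T_s$, of size $\poly(\len)$, with edge boundary at most $\phi_s\cdot d\cdot|S_s|$ for a small conductance $\phi_s$. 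Minor-freeness enters here: Fact~\ref{fct:forest} bounds $|E|\le r|V|$, which makes the auxiliary ``influence'' relation on pairs $(s,v)$ with $\prw{s}{v}{\len}\ge\len^{-c}$ itself sparse (needed so the greedy step terminates fast), and it is what guarantees such small low-boundary sets exist rather than the walk being forced to spread as in an expander.

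With these regions in hand, I would process sources $s$ in decreasing order of the number of still-uncovered vertices they visit with probability $\ge\len^{-c}$ at time $\len$, assign that set of vertices to a fresh part $P$ with source $s$, and delete them; vertices never covered are put into singletons. Sparsity of the influence relation (from Fact~\ref{fct:forest}) together with the hypothesis that $\ge(1-\len^{-1/5})n$ sources are good forces all but a $\len^{-1/5}$-fraction of vertices to be covered. The crossing edges split into boundaries of the chosen regions $S_s$ — whose conductances, after optimizing the per-region clipping threshold against $1/\theta\approx\len^{c}$, sum over the whole cover to roughly $\sqrt{c\len^{-1/5}\log\len}$ (this is where the square root and the $\log\len$ come from) times $dn$ — and the $\le d\len^{-1/5}n$ edges incident to uncovered singletons, giving the claimed $8dn\sqrt{c\len^{-1/5}\log\len}$ bound. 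For Property~1, each $v\in P$ lies in $S_s$ and has $\prw{s}{v}{\len}\ge\len^{-c}$; since the walk from $s$ keeps $\Omega(1)$ mass inside $S_s$ and leaks out only at rate $\phi_s$, it stays inside $S_s$ with $\Omega(1)$ probability for $\Theta(\len^{c+1})$ further steps, and combined with the $\le\tfrac12$-per-step decay of lazy-walk mass at a fixed vertex this forces $\sum_{t<10\len^{c+1}}\prw{s}{v}{t}\ge 1/(8\len^{c+1})$ — which is exactly what calibrates the horizon $10\len^{c+1}$ and the constant $1/8$.

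\textbf{Main obstacle.} The delicate part is doing the trapping and the greedy cover \emph{simultaneously and quantitatively}: extracting, for almost every good source, a region that is low-conductance at precisely the scale $\poly(\len)$ that Property~1 requires, while arranging that the boundaries of the greedily chosen regions do not overlap-count and total only $O(dn\sqrt{c\len^{-1/5}\log\len})$. Calibrating the clipping/conductance threshold against $1/\theta\approx\len^{c}$ so as to extract the $\sqrt{\len^{-1/5}\log\len}$ saving — rather than a weaker $\len^{-\Omega(1)}$ that would fail to match the $\len^{-1/5}$ appearing in the hypothesis — is essentially the technical heart of~\cite{KSS21}.
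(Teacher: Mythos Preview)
The paper does not prove this theorem at all: it is quoted verbatim from~\cite{KSS21} and used as a black box (together with Lemma~\ref{lem:clip:new}) to derive Corollary~\ref{col:1}. So there is nothing in the present paper to compare your sketch against.

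That said, one conceptual slip in your sketch is worth flagging. You invoke minor-freeness and Fact~\ref{fct:forest} inside the argument (``Minor-freeness enters here\ldots''), but the theorem as stated has no minor-freeness hypothesis --- it is a general statement about any bounded-degree graph in which $(1-\len^{-1/5})n$ vertices satisfy the clipped-$\ell_2$ condition. Minor-freeness is used \emph{separately}, in Lemma~\ref{lem:clip:new}, to verify that hypothesis; the partition construction itself must go through without it. So the sparsity of your ``influence relation'' and the existence of low-conductance regions cannot be argued via arboricity here; they have to come purely from the random-walk / evolving-set machinery, which is indeed how~\cite{KSS21} proceeds. Apart from that, your outline --- extract a heavy set from the clipping threshold, use a Lov\'asz--Simonovits/evolving-set trapping argument to find a low-conductance level set around it, and then greedily cover --- is in the right spirit, but the details (and in particular the calibration that produces the $\sqrt{c\len^{-1/5}\log\len}$ factor) live in~\cite{KSS21}, not in this paper.
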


\BC\label{col:1}
Let $G = (V, E)$ be a graph which is $R$-minor-free where $R$ is a graph on $r$ vertices.
There exists a partition $\{P_1, P_2, \ldots ,P_b\}$ of the $V$ such that:
\begin{enumerate}
    \item For each $P_i$, there exists $s \in V$ such that: $\forall v \in P_i$,
$\sum_{t < 10\ell^{8}} p_{s,t}(v) \geq 1/8\ell^{8}$.
\item The total number of edges crossing the partition is at most $\eps d n$.
\end{enumerate}

\EC

\begin{proof}
We first note that $8dn\sqrt{c\len^{-1/5}\log \len} \leq \eps d n$ for sufficiently large constant $\alpha$.

By Lemma~\ref{lem:clip:new} there exist 
at least $(1 - 1/\ell )n$ vertices such that
$\| \clip{\wvec{v}{\len}}{3/8}\|_2^2 \geq \ell^{-7}$.
Thus the corollary follows from the facts that $(1 - 1/\ell )n \geq (1-1/\ell^{1/5})n$ and $\|\clip{\wvec{s}{\ell}}{1/4}\|^2_2 \geq \|\clip{\wvec{s}{\ell}}{3/8}\|^2_2$. 
\end{proof}

\begin{algorithm}[h]
\caption{Covering-partition oracle\label{alg:cpo}}
\textbf{Input:} $v\in V$.\\
\textbf{Output:} A subset $S$ which covers the part of $v$. 
\BE
\item For every $t < 10\ell^{8}$ perform $x \eqdef \Theta(\ell^{8}\log \ell)$ random walks of length $t$ from $v$.\label{step:1}\;
\item Let $R$ denote the endpoints of the random walks preformed in the previous step.
\item For every vertex $r \in R$, for every $t < 10\ell^{8}$, perform $x$ random walks of length $t$ from $r$. \label{step:2}
\item Let $S$ denote the set of all vertices encountered by the random walks performed in Step~\ref{step:1} and Step~\ref{step:2}.
\item Return $S$.
\EE

\end{algorithm}

\begin{proof}[Proof of Theorem~\ref{thm:cpo}]
Let $G = (V, E)$ be a graph which is $R$-minor-free where $R$ is a graph over $r$ vertices.
Consider the partition of $V$, $\mathcal{P} = \{P_1, P_2, \ldots ,P_b\}$ as defined in Corollary~\ref{col:1} when we take the proximity parameter to be $\eps/2$.
We shall define another partition $\mathcal{P'}$ which is a refinement of $\mathcal{P}$ such that Algorithm~\ref{alg:cpo} returns for every $v \in V$, a subset $S$ such that $P' \subseteq S$ where $P'$ denotes the part of $v$ in $\mathcal{P}'$. Thereafter, we shall prove that, with high probability, the number of cut-edges of $\mathcal{P}'$ is not much greater than the number of cut-edges of $\mathcal{P}$.  

For every $v \in V$, we say that $v$ {\em fails} if Algorithm~\ref{alg:cpo}, when queried on $v$, does not return $S$ such that $P^v \subseteq S$, where $P^v$ denotes the part of $v$ in $\mathcal{P}$.
We define $\mathcal{P'}$ as follows.
For every $v \in V$, if there exists $u\in P^v$ such that $u$ fails, then the part of $v$ in $\mathcal{P'}$ is defined to be the singleton $\{v\}$ (namely, the entire part $P^v$ is partitioned into singletons in $\mathcal{P}'$).
Otherwise, it is defined to be $P^v$.

We next show that with high probability the cut-edges of $\mathcal{P}'$ is at most $\eps d |V|$.
For every $v$, the probability that $v$ fails is at most $p \eqdef \ell^{-c_1}$ for an appropriate setting of $x$ (with accordance to the Theta-notation), where $c_1$ is a constant that will be determined later.
Let $y = \ell^{c_2}$ be an upper bound on the number of vertices in the parts of $\mathcal{P}$, where $c_2$ is a constant.~\footnote{Clearly, since for each $P_i$, there exists $s \in V$ such that: $\forall v \in P_i$,
$\sum_{t < 10\ell^{8}} p_{s,t}(v) \geq 1/8\ell^{8}$, it follows that $y \leq 10\ell^{8} \cdot 8\ell^{8}$.} 

For every $v\in V$, define the random variable $X_v$ as follows. If $v$ fails then $X_v = |P_v|/y$ and otherwise $X_v = 0$.
Clearly $y \cdot \sum_{v \in V} X_v \geq |\mathcal{P'}| - |\mathcal{P}|$.
Note that $\{X_v\}_{v\in V}$ are independent random variables ranging in $[0, 1]$.

For every $v\in V$, we define the random variable $Y_v$ as follows. With probability $p$, $Y_v = 1$ and otherwise $X_v = 0$.
Clearly, $Y_v$ dominates $X_v$.
Since $\{Y_v\}_{v\in V}$ are identical independent random variables, it follows by the multiplicative Chernoff's bound that with high probability $y \cdot \sum_{v \in V} Y_v \leq y\cdot |V|\cdot 2p$.

Since for sufficiently large $c_1$, $p \leq \frac{\eps}{4y}$, it follows that with high probability $|\mathcal{P'}| - |\mathcal{P}| \leq \eps |V|/2$. Thus, the number of cut-edges in $\mathcal{P'}$ is greater than the number of cut-edges in $\mathcal{P}$ by at most $\eps d |V|/2$ (recall that the refinement from $\mathcal{P}$ is done by decomposing entire parts into singletons) .

\end{proof}

\subsection{Testing Hamiltonicity}
 In this section we prove the following theorem.
 
 \begin{theorem}\label{thm:ham1}
Given query access to an input graph $G=(V,E)$ where $G$ is a minor-free bounded degree graph and a parameters $\eps$ and $|V|$, Algorithm~\ref{alg:ham1} accepts $G$ with probability $1$ if $G$ is Hamiltonian and rejects $G$ with probability at least $2/3$ if $G$ is $\eps$-far from being Hamiltonian. The query complexity of the algorithm is $\poly(d/\eps)$ and the running time is exponential in $\poly(d/\eps)$. 
\end{theorem}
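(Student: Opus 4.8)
The plan is to design Algorithm~\ref{alg:ham1} as the natural one-sided error tester built on the covering partition oracle, and to base its correctness on a structural claim (mentioned in Section~\ref{sec:ham}) that lower-bounds the distance from Hamiltonicity via local witnesses. Concretely: sample $\Theta(d/\eps)$ vertices uniformly at random; for each sampled $v$, query the covering partition oracle (with proximity parameter $\Theta(\eps/d)$) to obtain a small set $S_v \supseteq g_{\calP}(v)$; inside the ball/set $S_v$ one can afford to read all incident edges, so compute $g_{\calP}(v)$ exactly (the oracle guarantees the induced subgraph on $g_{\calP}(v)$ is connected, so it is identifiable), then compute the minimum path cover of $G[g_{\calP}(v)]$ by brute force and also count the number of edges of $G$ crossing out of $g_{\calP}(v)$. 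Reject if for some sampled part $P = g_{\calP}(v)$ the minimum path cover size of $G[P]$ strictly exceeds $|E^G(P, V\setminus P)|$; otherwise accept.

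The completeness direction (accept Hamiltonian graphs with probability $1$) is exactly the structural claim referenced before Section~\ref{sec:ham}: if $G$ is Hamiltonian and $S\subseteq V$, then the Hamiltonian path restricted to $S$ is a collection of paths covering $S$, and the number of these paths is at most one more than the number of times the path leaves $S$, i.e. at most $|E^G(S,V\setminus S)|$ (in fact the cut has at least $2(\text{path count})-2$ edges, but the weaker bound suffices); hence the minimum path cover of $G[S]$ never exceeds $|E^G(S,V\setminus S)|$, so the test never finds a witness and always accepts. This holds for every part $P$ of every partition, so no matter what the oracle does, the tester accepts with probability $1$.

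For soundness, suppose $G$ is $\eps$-far from Hamiltonian. First condition on the good event (probability $\geq 2/3$, say, after adjusting constants, or boost the oracle's success probability) that $|E_{\calP}|\leq (\eps/6)|V|$. Let $\hat G = (V, E\setminus E_{\calP})$; by Claim~\ref{clm:edges} (applied with $F = E_{\calP}$), $\dist(\hat G) \le \dist(G) + |E_{\calP}|$, so if $G$ is $\eps$-far then $\hat G$ is still $(\eps - \eps/6)$-far, hence $\dist(\hat G) \geq (5\eps/6)|V| - O(1)$. Now $\hat G$ is exactly the disjoint union of the induced subgraphs $\hat G[P] = G[P]$ over the parts $P\in\calP$ (since we removed all cut-edges), so by Claim~\ref{clm:pc} applied to $\hat G$ — whose minimum path cover is the sum over parts of the minimum path covers of $G[P]$ — we get $\sum_{P\in\calP}\big(\mathrm{mpc}(G[P]) - 1\big) \ge \dist(\hat G) \geq (5\eps/6)|V| - 1$. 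Meanwhile $\sum_{P\in\calP}|E^G(P,V\setminus P)| = 2|E_{\calP}| \leq (\eps/3)|V|$ and $\sum_{P\in\calP} 1 = |\calP| \le |V|$ is not directly bounded, so instead write the witness condition as $\mathrm{mpc}(G[P]) > |E^G(P,V\setminus P)|$, i.e. $\mathrm{mpc}(G[P]) - 1 \ge |E^G(P,V\setminus P)|$ contributes; summing, the parts that are \emph{not} witnesses contribute $\sum_{\text{non-witness }P}(\mathrm{mpc}(G[P])-1) \le \sum_P |E^G(P,V\setminus P)| = 2|E_{\calP}| \le (\eps/3)|V|$, so the witness parts must account for $\mathrm{mpc}$-excess at least $(5\eps/6 - \eps/3)|V| - 1 = (\eps/2)|V| - 1$, and since each part has $\le k = \poly(1/\eps)$ vertices, at least $(\eps/2)|V|/k - O(1)$ vertices lie in witness parts. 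A uniformly random vertex therefore lands in a witness part with probability $\Omega(\eps/k) = \poly(\eps)$ — wait, that is too small; instead one uses $d$: each witness part contributes at least one unit of $\mathrm{mpc}$-excess per roughly... Actually the clean bound is: the number of vertices in witness parts is at least (total $\mathrm{mpc}$-excess of witness parts) $\geq (\eps/2)|V| - 1$, since each part's $\mathrm{mpc}$-excess is at most its size. Hence a uniform vertex hits a witness part with probability $\ge \eps/2 - o(1)$, and $\Theta(1/\eps)$ samples suffice to catch one with probability $\ge 2/3$; taking a union bound with the oracle's good event (each at $\ge 5/6$) gives the claimed $2/3$.

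The main obstacle is the bookkeeping that makes the covering partition oracle usable here: we must argue that from the covering set $S_v$ we can reconstruct the true part $g_{\calP}(v)$ \emph{and} the exact cut size $|E^G(g_{\calP}(v), V\setminus g_{\calP}(v))|$ using only $\poly(d/\eps)$ queries — this uses that $g_{\calP}(v)$ induces a connected subgraph, that $|S_v| = \poly(d/\eps)$, and that every vertex in $S_v$ has degree $\le d$, so all edges incident to $S_v$ (in particular all edges leaving $g_{\calP}(v)$) are visible after $d|S_v|$ incidence queries. The running time is dominated by the brute-force minimum path cover on a set of size $\poly(d/\eps)$, which is exponential in $\poly(d/\eps)$ as claimed; I would note the APX-hardness remark to justify that this brute force (or something exponential) is essentially unavoidable. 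The remaining steps — Chernoff/Markov-type concentration for the sample and for $|E_\calP|$, and plugging in the oracle parameter $\eps' = \Theta(\eps/d)$ so that $|E_{\calP}| \le \eps' d |V| \le (\eps/6)|V|$ — are routine.
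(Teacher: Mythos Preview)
Your proposal has a genuine gap: you assume that from the covering set $S_v \supseteq g_{\calP}(v)$ returned by the covering partition oracle you can ``compute $g_{\calP}(v)$ exactly'', justified only by the connectedness of $G[g_{\calP}(v)]$. This is precisely what a covering partition oracle does \emph{not} allow --- that is the whole point of the relaxation over a partition oracle. In general $S_v$ has many connected subsets containing $v$, and the oracle gives you no information to single out the true part; querying the oracle again on the vertices of $S_v$ only produces more covering sets, never the parts themselves. If you swap in a genuine partition oracle your algorithm is fine, but then the covering-oracle machinery of this section plays no role.

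The paper's fix is to avoid recovery altogether: for each sampled $v$ it enumerates \emph{every} connected subset $T\subseteq S_v$ and tests whether $T$ is a witness, i.e.\ whether the minimum path cover of $G[T]$ exceeds $|E(T,V\setminus T)|/2+1$ (together with a separate check for $E(S_v,V\setminus S_v)=\emptyset$). Completeness holds because Claim~\ref{hamcut.cl} applies to \emph{any} subset $T$, not only to parts of $\calP$; soundness holds because the true part $g_{\calP}(v)$ is one of the enumerated $T$'s, so if it is a witness the algorithm finds it. This subset enumeration is also what drives the stated exponential running time --- your brute-force minimum path cover on a single known set would be cheaper, but that set is not available to you.

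A smaller point: in your soundness argument the step ``$\sum_{P\in\calP}\big(\text{min.\ path cover of }G[P]-1\big)\ge \dist(\hat G)$'' has the inequality reversed; the left side equals $\dist(\hat G)+1-|\calP|$. Your conclusion that $\Omega(\eps|V|)$ vertices lie in witness parts is still true, but you must either bound $\sum_P \text{(min.\ path cover of }G[P])=\dist(\hat G)+1$ directly, or argue by contrapositive as the paper does: assuming few parts are witnesses, it explicitly constructs a Hamiltonian path in $G$ using fewer than $\eps|V|$ edge insertions.
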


The correctness of Algorithm~\ref{alg:ham1} builds on the following claim which, given $S\subset V$, bounds the size of a minimum path cover in $G[S]$ by the size of the cut of $S$.  

\begin{claim}\label{hamcut.cl}
Let $G=(V,E)$ be a graph and let $S\subset V$ be a subset of vertices of $G$. Let $k$ be the size of a minimum path cover in $G[S]$. 
If $k-1 > |E(S, V\setminus S)|/2$, then there is no Hamiltonian path in $G$.
Moreover, any Hamiltonian path in $G$ must include at least $2(k-1)$ edges from $E(S, V\setminus S)$.
\end{claim}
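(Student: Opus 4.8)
The plan is to analyze what happens to a hypothetical Hamiltonian path $\mathcal{H}$ of $G$ when we restrict attention to the vertices in $S$. Consider a Hamiltonian path $\mathcal{H} = (s_1, s_2, \ldots, s_{|V|})$ in $G$, and look at the subsequence of vertices that lie in $S$ together with the edges of $\mathcal{H}$ that have both endpoints in $S$. These edges form a collection of vertex-disjoint sub-paths that together cover exactly $S$, i.e., they constitute a path cover of $G[S]$. Hence the number of sub-paths in this collection is at least $k$, the size of a minimum path cover of $G[S]$.

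Next I would count how many sub-paths $\mathcal{H}$ is broken into when we delete from $\mathcal{H}$ every vertex of $V \setminus S$ (equivalently, every edge of $\mathcal{H}$ incident to $V \setminus S$). First I would argue that the edges of $\mathcal{H}$ incident to $V \setminus S$ are exactly the edges of $\mathcal{H}$ lying in $E(S, V\setminus S)$ together with the edges of $\mathcal{H}$ lying inside $G[V\setminus S]$; deleting a single vertex from a path splits it into at most two pieces, but the cleaner bookkeeping is to note that removing all edges of $\mathcal{H}$ that touch $V\setminus S$ and then discarding the resulting pieces that lie entirely in $V \setminus S$ leaves precisely the path cover of $G[S]$ described above. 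The number of pieces (over both $S$ and $V \setminus S$) is one more than the number of removed edges of $\mathcal{H}$; since each removed edge either crosses the cut or is interior to $V\setminus S$, and since the pieces lying in $S$ are separated from one another only by stretches of $\mathcal{H}$ that pass through $V \setminus S$, I would show that each "gap" between two consecutive $S$-pieces of $\mathcal{H}$ consumes an even number of cut-edges — one cut-edge to leave $S$ and one to re-enter $S$ — plus possibly interior edges of $V \setminus S$. Consequently, if $\mathcal{H}$ is split into $m \geq k$ sub-paths within $S$, the number of cut-edges used by $\mathcal{H}$ is at least $2(m-1) \geq 2(k-1)$, because the $m$ sub-paths of $S$ along $\mathcal{H}$ are separated by $m-1$ gaps and each gap uses at least two cut-edges (entering and leaving $V\setminus S$). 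This is exactly the "moreover" statement. The first statement then follows immediately: all the cut-edges used by $\mathcal{H}$ are in particular elements of $E(S, V\setminus S)$, so $2(k-1) \leq |E(S,V\setminus S)|$; contrapositively, if $k - 1 > |E(S,V\setminus S)|/2$ then no such $\mathcal{H}$ exists.

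The main obstacle I anticipate is the careful case analysis at the two ends of the Hamiltonian path: if $s_1 \in S$ or $s_{|V|} \in S$, then the first (resp. last) $S$-piece of $\mathcal{H}$ is bordered by a gap on only one side, so such a piece could in principle cost only one cut-edge rather than two. I would handle this by observing that there are at most two such "boundary" pieces, but that a boundary gap still costs at least one cut-edge, and a cleaner way to phrase it is: index the $m$ pieces of $S$ in the order they appear along $\mathcal{H}$; between piece $i$ and piece $i+1$ there is a maximal stretch of $\mathcal{H}$ entirely inside $V \setminus S$, which begins with one cut-edge (leaving $S$) and ends with another cut-edge (entering $S$), and these are distinct since the stretch is nonempty; so the $m-1$ internal gaps contribute $2(m-1)$ distinct cut-edges, and the possible boundary stretches before piece $1$ and after piece $m$ only add more. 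Thus the count $2(m-1) \geq 2(k-1)$ is safe regardless of the endpoints, and the boundary pieces do not cause a loss. A second, minor, point to get right is that a minimum path cover of $G[S]$ is allowed to use single-vertex paths, so the collection of $\mathcal{H}$-induced sub-paths (some of which may be isolated vertices of $S$ along $\mathcal{H}$) is indeed a valid path cover, justifying $m \geq k$.
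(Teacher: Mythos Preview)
Your proposal is correct and follows essentially the same approach as the paper: both arguments take the maximal sub-paths of a Hamiltonian path $\mathcal{H}$ lying in $G[S]$, observe that they form a path cover of $G[S]$ (so there are at least $k$ of them), and count at least two distinct cut-edges per gap between consecutive sub-paths to obtain $|E(S,V\setminus S)| \geq 2(k-1)$. Your treatment is in fact more careful than the paper's, as you explicitly address the endpoint cases and the possibility of singleton sub-paths.
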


\begin{proof}
Let $G=(V,E)$ be a graph.
Assume toward contradiction that there exists Hamiltonian path in $G$, $\mathcal{H}=(v_1, v_2, \ldots, v_{|V|})$ and a subset $S\subset V$ such that  $k-1 > |E(S, V\setminus S)|/2$, where $k$ is the size of a minimum path cover in $G[S]$. Let $\mathcal{P'}_S$ denote the set of all maximal sub-paths of $\mathcal{H}$ in $G[S]$. 
In order to connect the sub-paths in $\mathcal{P'}_S$ it must hold that $\mathcal{H}$ leaves and returns to $G[S]$ at least $2(\mathcal{P'}_S - 1)$ times, each time using a different edge. Thus, $|E(S, V\setminus S)| \geq 2(|\mathcal{P'}_S| - 1)$. Since $\mathcal{P'}_S$ is a path cover of $G[S]$ it follows that $|\mathcal{P'}_S| \geq k$, thus, $|E(S, V\setminus S)| \geq 2(k - 1)$, in contradiction to our assumption.
Hence the claim follows. 
\end{proof}

We next list our algorithm and prove its correctness.

\begin{algorithm}
\caption{Testing Hamiltonicity in minor-free, bounded degree, graphs\label{alg:ham1}}
\textbf{Input:} Oracle access to a minor-free, bounded-degree, graph $G=(V,E)$\\
\textbf{Output:} Tests if $G$ is Hamiltonian with one-sided error. 
\BE
\item Sample a subset, $S\subseteq V$, of $y \eqdef \Theta(x/\eps)$ vertices, uniformly at random, where $x$ is an upper bound on the size of the sets returned by the covering partition oracle when execute with parameter $\eps/6$.
\item For each $v\in S$ do:
\BE
\item Query the covering partition oracle on $v$ with parameter $\eps/6$, and let $S_v$ denote the returned set.\label{step.po}
\item If $E(S_v, V\setminus S_v) = \emptyset$ then return REJECT.\label{step1.ham1}
\item For each subset $T \subseteq S_v$ such that $G[T]$ is connected, find the size of the minimum path cover of $T$ and return REJECT if it is greater than $|E(T, V\setminus T)|/2 + 1$. \label{step2.ham1}
\EE
\EE
\end{algorithm}

\begin{proof}[Proof of Theorem~\ref{thm:ham1}]
By Claim~\ref{hamcut.cl}, Algorithm~\ref{alg:ham1} never rejects graphs which are Hamiltonian.

Let $G$ be a minor-free bounded degree graph which is $\eps$-far from being Hamiltonian. We shall prove that Algorithm~\ref{alg:ham1}  rejects $G$ with probability at least $2/3$.
Let $\calP$ denote the partition that the oracle, executed in Step~\ref{step.po}, answers according to.
With high constant probability, it holds that $|E_\calP| \leq \frac{\eps |V|}{6} $.
Let $E_1$ denote the event that this conditions holds.

Let $\calF$ denote the set of parts, $S$, in $\calP$, such that $E(S, V\setminus S) = \emptyset$ or for which the size of the minimum path cover of $G[S]$ is greater $|E(S, V\setminus S)|/2 + 1$

Assume towards contradiction that $E_1$ occurs and that $|\calF| < \eps |V| / (2 x)$ where $x$ is an upper bound on the number of vertices in each part of $\calP$.
We next show that $\dist(G) \leq \eps |V|$ in contradiction to our assumption. 

For each part $S \in \calF$ we construct a path over $S$ that visits each vertex in $S$ exactly once by adding at most $|S|-1 \leq x -1$ edges to $G$. 

For each part $S \in \calP\setminus \calF$, let $\mathcal{C}_S$ denote a minimum path cover of $G[S]$.
We construct a path over $S$ that visits each vertex in $S$ exactly once by adding at most $|\mathcal{C}_S| - 1 \leq |E(S, V\setminus S)|/2$ edges to $G$. 

We then connect all the paths induced on the different parts of $\calP$ by adding at most $|\calP| = |\calF| + |\calP\setminus \calF|$ edges.

Overall the number of edges added is at most:
\begin{equation*}
    |\calF| \cdot (x-1) + \left|E_\calP\right| + |\calF| + |\calP\setminus \calF| 
    \leq |\calF| \cdot x + 3\left|E_\calP\right| < \eps|V| ,
\end{equation*}
where the first inequality follows from the fact that $|\calP\setminus \calF| \leq 2|E_\calP|$ as for each $S \in \calP\setminus \calF$ it holds that $E(S, V\setminus S) \cap E_\calP \neq \emptyset$ and each edge in $E_\calP$ is adjacent to at most $2$ parts in $\calP$. 

Thus, if $\delta_{\ham(G)} > \eps |V|$ and $E_1$ occurs then $|\calF|  \geq \eps|V|/(2 x)$. Thus, the number of vertices in parts that belong to $\calF$ is at least $\eps|V|/(2 x)$, which implies that $G$ is rejected with high probability either in Step~\ref{step1.ham1} or in Step~\ref{step2.ham1} of Algorithm~\ref{alg:ham1}, as desired.  
\end{proof}

\subsection{Local algorithms for constructing a spanning subgraph with almost optimum weight}

In this section we prove the following theorem. 

\BT\label{thm:appmst}
Algorithm~\ref{alg:Mst} is a local algorithm for $(1+\eps)$-approximating the minimum weight spanning graph for minor-free graphs, with high constant success probability and time and query complexity $\poly(W,d,\eps^{-1})$.
\ET

\begin{proof}
Let $\calP$ denote the partition that the oracle, executed in Step~\ref{step:po1}, answers according to.
With high constant probability, it holds that $|E_\calP| \leq \eps |V|/W$.
Let $E_1$ denote the event that this conditions holds.
We claim that the number edges for which Algorithm~\ref{alg:Mst} returns YES for which both endpoints belong to the same part is at most $|V|-1$.
To see this consider a part $T \in \calP$ and a cycle $C$ in $G[T]$. Let $\{u, v\}$ denote the heaviest edge in the cycle. When queried on $u$ and $v$ the covering partition oracle returns sets $S_u$ and $S_v$ such that $T \subseteq S_u \cup S_v$. Thus the cycle $C$ is contained in $G[S_u \cup S_v]$. Therefore the algorithm returns NO on $\{u, v\}$ in Step~\ref{step:mst3}. By the cycle rule the number of edges in $G[T]$ for which the algorithm returns YES is exactly $|T|-1$. 

Hence conditioned on $E_1$, the total number of edges for which Algorithm~\ref{alg:Mst} returns YES is at most $(|V|-1) + \eps |V|/W$. 

By the cycle rule, any edge, $e$, for which Algorithm~\ref{alg:Mst} returns NO does not belong to the MST of $G$.
Since the MST consists of exactly $|V|-1$ edges, it follows that, conditioned on $E_1$, the number of edges that do not belong to the MST and for which Algorithm~\ref{alg:Mst} returns YES is at most $\eps |V|/W$ as desired.
\end{proof}

\begin{algorithm}
\caption{Local algorithm for approximated-MST in bounded-degree minor-free graphs \label{alg:Mst}}
\textbf{Input:} $\{u, v\}\in E$ and parameters $\eps$ and $W$.\\
\textbf{Output:} YES if $\{u, v\}$ belongs to the approximated-MST and NO otherwise. 
\BE
\item Perform a query $u$ and a query $v$ to the covering-partition oracle with parameter $\eps/W$. Let $S_u$ and $S_v$ denote the subsets returned by the oracle, respectively.\label{step:po1}
\item Find the subgraph induced on $S_u \cup S_v$, denoted by $G[S_u \cup S_v]$.
\item Return NO if and only if $\{u, v\}$ is the heaviest edge on any cycle in $G[S_u \cup S_v]$.  \label{step:mst3}
\EE

\end{algorithm}

\subsection{Testing monotone and additive properties}
\begin{theorem}
Any property of graphs which is monotone (closed under removal of edges and vertices) and additive (closed under the disjoint union of graphs) can be tested with one-sided error in minor-free graphs with bounded degree $d$ with query complexity which is $\poly(d/\eps)$ where $\eps$ is the proximity parameter. 
\end{theorem}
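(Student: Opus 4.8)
The plan is to run the standard reduction from testing a monotone, additive property $\Pi$ to the covering partition oracle of Theorem~\ref{thm:cpo}, exactly mirroring the structure of Algorithm~\ref{alg:ham1}. We may assume $\Pi$ contains some graph on at least one vertex, since otherwise every nonempty input is $\eps$-far and the tester that always rejects works with one-sided error. The tester samples $\Theta(1/\eps)$ vertices $v$ uniformly at random; for each it queries the covering partition oracle with parameter $\eps/4$ to obtain a connected set $S_v \supseteq g_{\calP}(v)$ with $|S_v|\le k=\poly(d/\eps)$, reads off the induced subgraph $G[S_v]$ with $O(dk)$ incidence queries, and rejects if $G[S_v]\notin\Pi$; if every sampled part passes, it accepts. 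Each oracle call costs $\poly(d/\eps)$ queries by Theorem~\ref{thm:cpo}, and reading $G[S_v]$ costs $O(dk)=\poly(d/\eps)$, so the overall query complexity is $\poly(d/\eps)$.

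Completeness is immediate and gives the one-sided error: if $G\in\Pi$, then for every sample $G[S_v]$ is a vertex-induced subgraph of $G$, hence lies in $\Pi$ by monotonicity, so the tester never rejects.

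For soundness, fix the random partition $\calP$ that the oracle is consistent with and condition on the event $E_1=\{|E_\calP|\le(\eps/4)d|V|\}$, which holds with high probability by Theorem~\ref{thm:cpo}. Call a part $V_i\in\calP$ \emph{bad} if $G[V_i]\notin\Pi$. First I would show that if $G$ is $\eps$-far from $\Pi$ then, under $E_1$, the bad parts jointly contain at least $(\eps/2)|V|$ vertices: otherwise, delete all cut-edges of $\calP$ (at most $(\eps/4)d|V|$ of them) together with all edges incident to a vertex of a bad part (at most $(\eps/2)|V|\cdot d$ of them), a total of fewer than $\eps d|V|$ edge modifications; the resulting graph has vertex set $V$ and equals the disjoint union of the $G[V_i]$ with $V_i$ good together with an edgeless graph on the vertices of the bad parts. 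Since $\Pi$ is nonempty, monotone and additive, the edgeless graph on any number of vertices lies in $\Pi$, and since every good $G[V_i]\in\Pi$ and $\Pi$ is additive, this graph lies in $\Pi$, contradicting that $G$ is $\eps$-far. Next, if $v$ lies in a bad part then $G[g_{\calP}(v)]$ is an induced subgraph of $G[S_v]$ that is not in $\Pi$, so by monotonicity $G[S_v]\notin\Pi$ as well; hence at least an $(\eps/2)$-fraction of all vertices are ``rejecting'', and $\Theta(1/\eps)$ uniform samples hit one with probability at least $9/10$. A union bound with the failure probability of $E_1$ yields overall rejection probability at least $2/3$.

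The one genuinely non-routine point is that the covering partition oracle returns a superset $S_v$ of the true part $g_{\calP}(v)$ rather than the part itself. This is harmless here precisely because $\Pi$ is monotone in both directions we need: membership of $G$ in $\Pi$ passes down to every $G[S_v]$ (completeness), while a bad part remains a witness to $G[S_v]\notin\Pi$ (soundness). Everything else is the usual partition-oracle reduction, with the constants $\eps/4$ and $\eps/2$ chosen so that the cut-edges plus the edges touched by bad parts stay strictly under the $\eps d|V|$ edit budget.
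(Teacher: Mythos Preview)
Your proof is correct and follows the same covering-partition-oracle reduction as the paper: sample vertices, query the oracle, accept iff every returned $G[S_v]$ lies in $\Pi$, using monotonicity for completeness and additivity plus monotonicity for soundness. Your soundness counting is in fact slightly sharper than the paper's---you directly lower-bound the number of \emph{vertices} in bad parts by $(\eps/2)|V|$ (by also deleting the edges touching bad parts), giving $\Theta(1/\eps)$ samples, whereas the paper lower-bounds the number of \emph{edges} in bad parts and divides by $d$ to get $\eps|V|/(2d)$ rejecting vertices and hence $O(d/\eps)$ samples---and you make explicit the contrapositive-of-monotonicity step that lifts a bad part $g_{\calP}(v)\notin\Pi$ to its superset $S_v\notin\Pi$, which the paper leaves implicit.
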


\begin{proof}
Let $\calT$ be a property of graphs which is monotone and additive. 
We propose the following algorithm for testing $\calP$ on an input graph $G$ which is a minor-free graphs of degree bounded by $d$. Sample a set of $O(d/\eps)$ vertices, $S$, uniformly at random and run the covering partition oracle on each $v \in S$ with parameter $\eps/2$. 

For each $v \in S$, let $S_v$ denote the set returned by the covering partition oracle when queried on $v$. Return ACCEPT iff for all $v \in S$, $G[S_v]$  has the property $\calT$.

If $G$ has the property $\calT$ then since $\calT$ is monotone it follows that for all $v\in S$, $G[S_v]$ has the property $\calT$ as well.

Let $\calP$ denote the partition that the covering partition oracle answers according to. 
With high constant probability $|E_\calP| \leq \eps |V|/2$.
Let $E_1$ denote the event that this condition holds.
If $G$ is $\eps$-far from having the property $\calT$ then, conditioned on $E_1$, $G' = (V, E\setminus E_\calP)$ is $(\eps/2)$-far from having the property $\calT$.

Thus we need to remove at least $\eps |V|/2$ edges from $G'$ to obtain the property $\calT$. 
By the additivity and monotonicity of $\calT$ it follows that $G'$ has the property $\calT$ if and only if for every $T\in \calP$, $G[T]$ has the property $\calT$.
Thus, there are at least $\eps |V|/2$ edges, and hence at least $\eps |V|/(2d)$ vertices, that belong to parts, $T \in \calP$ such that $G[T]$ does not have the property $\calT$. Hence, with high constant probability the algorithm sample one of these vertices and rejects $G$.
This concludes the proof.
\end{proof}

\subsection*{Acknowledgement}
We would like to thank Dana Ron and Oded Goldreich for helpful comments.

 \bibliographystyle{plain}
\bibliography{refs}

\ifnum\random=1
\appendix
\section{Related Work}
\relatedwork
\fi

\ifnum\random=0
\appendix
\fi
\section{Omitted proofs and details}

\subsection{Proof of Claim~\ref{clm:F}}

For each cluster $B$, we charge to $B$  a subset of the edges incident to $B$
so that the union of all the charged edges (over all clusters) contains $U$.
Our goal is to show that 
with probability $1-1/\Omega(n)$,
the total number of charged edges is at most $\eps |V|/(3W)$.

Let $\hat{G} = (V, \hat{E})$ be such that $\hat{E} = E\setminus E_\calP$. 
Consider the  auxiliary graph, denoted $\wtG$, that results from contracting each cluster $B$ and isolated parts in $\hat{G}$
into a  {\em mega-vertex\/} in $\wtG$, which we denote by $v(B)$.
For each pair of clusters $B$ and $B'$ such that $E^{\hat{G}}(B,B')$ is non-empty, there is an edge $(v(B),v(B'))$ in  $\wtG$, which we refer to as a {\em mega edge\/}, and whose weight is $|E^{\hat{G}}(B,B')|$. Since $G$ is minor-free, so is $\wtG$.
By Fact~\ref{fct:forest}, which bounds the arboricity of minor-free graphs, we can partition the mega-edges of $\wtG$ into $r$ forests.
Consider orienting the mega-edges of $\wtG$ according to this partition (from children to parents in the trees of these forests), so that each mega-vertex has at most $r$ outgoing mega-edges.
For cluster $B$ and a cluster $B'$ such that $(v(B),v(B'))$ is an edge in $\wtG$ that is oriented from $v(B)$ to $v(B')$,
we shall charge to $B$ a subset of the edges in $E^{\hat{G}}(B, B')$, as described next.

Let $x$ be an upper bound on the size of parts returned by the partition oracle when executed with parameter $\eps/(6W)$.
Thus $x$ is an upper bound on the size of part in the partition $\calP$ of $G[\light]$.
Let $E^b(B,B')$ denote the subset of edges in $E^{\hat{G}}(B,B')$
  that are the $(\eps/(9rW ))\cdot |E(B,B')|$ lightest edges of $E(B,B')$.
We  charge all the edges in $E^b(B,B')$ to $B$. The rationale is that for these edges it is likely that the algorithm won't sample an edge in $E^{\hat{G}}(B,B')$ which is lighter.
The total number of such edges is at most $(\eps/(9 rW))\cdot |E| \leq \eps|V|/(9W)$.


For a light vertex $y$ let $subpart(y)$ denote the subpart of $y$.
Let $u$ be the center of the cluster $B$, and let
$N^b(u,B')$ be the set of vertices, $y \in N(u)$, such that:
$$\left(y\in B' \mbox{ and } (u, y)\in E^b(B,B')\right)
     \mbox{ or } \left(\exists (y',z)\in E^b(B,B') \mbox{ s.t. } y'\in {subpart}(y)\right)\;.$$
That is, $N^b(u,B')$ is the subset of neighbors of $u$ such that if
Algorithm~\ref{alg:sample}
selects one of them in Step~\ref{sample:step1}, then it obtains an edge in $E^b(B,B')$.
We consider two cases.

{\bf First case: $|N^b(u,B')|/|N(u)| < \eps^2/(162 W^2 r^3 x \wtd)$.}
In this case we charge all edges in $E^{\hat{G}}(B,B')$ to $B$.
For each part ${subpart}(y)$ such that $y \in N^b(u,B')$ there are at most
$|{subpart}(y)|\cdot \wtd$ edges $(y',z)\in E^b(B,B')$ for which $y'\in {subpart}(y)$.
Therefore, in this case
$|E^b(B,B')| \leq  x  \wtd \cdot |N^b(u,B')| \leq N(u) \cdot \eps^2/(162 W^2 r^3)$
and hence $|E(B,B')| < (9rW/\eps)\cdot \eps^2/(162 W^2 r^3) \cdot N(u)$.
It follows that the total number of charged edges of this type is at most
$ (\eps/(18rW))\cdot 2|E| \leq \eps n /(9W)$.

{\bf Second case: $|N^b(u,B')|/|N(u)| \geq \eps^2/(162 W^2 r^3 x \wtd)$.}
For each $u$ and $B'$ that fall under this case we define the set of edges $Y(u, B') = \{(u, v): v \in N^b(u,B')\}$ and denote by $Y$ the union of all such sets (over all such pairs $u$ and $B'$).
Edges in $Y$ are charged to $B$ if and only if they belong to $U$ and are incident to a vertex in $B$.
Fix an edge in $Y$ that is incident to $B$, and note that the selection of neighbors of $u$
is done according to a $t$-wise independent distribution for $t > 4q$, where $q$ is the sample size set in
Step~\ref{sample:step1} of the Algorithm~\ref{alg:sample}.
Therefore, the probability that the edge belongs to $U$ is
upper bounded by $(1-\eps^2/(162 W^2 r^3 x \wtd))^q$, which by the setting of $q$, is at most $p=\eps/(18Wr)$ (for sufficiently large constant w.r.t. the Theta notation).

We next show, using Chebyshev's inequality, that with high probability, the number of edges in $Y$ that are in $U$ is at most $2p|E|$.
For $y \in Y$, define $J_y$  to be an indicator variable that is $1$ if and only if $y\in F$.
Then for a fixed $y\in Y$, $\E[J_y] \leq p$ and $\{J_y\}$ are pairwise independent (this is due to the fact that the samples of every pair of edges are pairwise independent).
Therefore, by Chebyshev's inequality,
$$
\Pr\left[\sum_{y\in Y} J_y \geq 2p|E|\right] \leq \frac{\Var[\sum_{y\in Y} J_y]}{(p|E|)^2} = \frac{\sum_{y\in Y} \Var(J_y)}{(p|E|)^2} \leq \frac{p(1-p)|E|}{(p|E|)^2} =  \frac{1-p}{p|E|} = \frac{1}{\Omega(n)}\;,
$$
and the proof of Claim~\ref{clm:F} is completed.

\begin{remark}
The random seed that Algorithm~\ref{alg:MstGlob} uses consists of two parts.
The first part is for running the partition oracle.
The second part is for selecting random neighbors in Step~\ref{sample:step1} of Algorithm~\ref{alg:sample}.  
Since the selection of neighbors is according to a $t$-wise independent distribution we obtain
that a random seed of length 
$\tilde{O}(\log n)$
is sufficient.

\end{remark}

\subsection{The local implementation of Algorithm~\ref{alg:MstGlob}}\label{sec:localimp}

Algorithm~\ref{alg:MstU} is the local implementation of Algorithm~\ref{alg:MstGlob} and is listed next.

\begin{algorithm}
\caption{Local algorithm for approximated-MST in unbounded-degree minor-free graphs \label{alg:MstU}}
\textbf{Input:} $\{u, v\}\in E$.\\
\textbf{Output:} YES if $\{u, v\}$ belongs to the approximated-MST and NO otherwise. 
\BE
\item If both $u$ and $v$ are in $H$ return YES.
\item If both $u$ and $v$ are light:
\BE
\item Query the partition oracle on $u$ and $v$ and return YES if they belong to different parts.
\item Find the sub-parts of $u$ and $v$ by running Algorithm~\ref{alg:sub-parts}.
\item If $u$ and $v$ are in the same sub-part:
\BE
\item Return YES if $\{u, v\}$ is in the set of edges returned by Algorithm~\ref{alg:sub-parts} (when running on $u$).
\item Otherwise, return NO.
\EE
\item Otherwise, set $C_u$ to be the center of $u$ and $C_v$ to be  the center of $v$.
\item If $C_u = C_v$ return NO.
\EE
\item Otherwise, if $u$ is light and $v$ is heavy (and analogously if $v$ is light and $u$ is heavy) then:
\begin{enumerate}
    \item Find the sub-part of $u$ and set $C_u$ to be the center of this sub-part
    \item Set $C_v = v$
\end{enumerate}
\item Run Algorithm~\ref{alg:sample} on $C_u$ and $C_v$ and return YES if the edge $\{u, v\}$ is lighter than the edge returned by the algorithm. Otherwise, return NO.
\EE
\end{algorithm}

\ifnum\random=1
\subsection{Proof of Claim~\ref{clm:pc}}
\proofclaim

\subsection{Proof of Claim~\ref{clm:edges}}
\proofclaimTwo

\subsection{Proof of Claim~\ref{clm:vertices}}
\proofclaimThree

\subsection{Proof of Claim~\ref{claim:almost}}
\claimMSF

\subsection{Proof of Claim~\ref{claim:edges1}}
\claimSim

\subsection{Proof of Claim~\ref{claim:edges3}}
\claimSimB
\fi

\end{document}